\title{Type Theory with Single Substitutions\thanks{This research was supported
by the European Union (ERC, HOTT, 101170308). Views and opinions expressed are
however those of the author(s) only and do not necessarily reflect those of the
European Union or the European Research Council. Neither the European Union nor
the granting authority can be held responsible for them.}}
\author{Ambrus Kaposi
\institute{Eötvös Loránd University (ELTE)\\
Budapest, Hungary}
\email{akaposi@inf.elte.hu}
\and
Szumi Xie
\institute{Eötvös Loránd University (ELTE)\\
Budapest, Hungary}
\email{szumi@inf.elte.hu}
}
\newtheorem{theorem}{Theorem}
\newtheorem{problem}[theorem]{Problem}
\newtheorem{definition}[theorem]{Definition}
\newtheorem{lemma}[theorem]{Lemma}
\newcommand{\Agda}{%
  \raisebox{-.05em}{\includegraphics{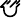}}%
}
\newcommand{\ra}{\rightarrow}
\newcommand{\Ra}{\Rightarrow}
\newcommand{\Set}{\mathsf{Set}}
\newcommand{\Prop}{\mathsf{Prop}}
\newcommand{\Ty}{\mathsf{Ty}}
\newcommand{\Tm}{\mathsf{Tm}}
\newcommand{\Con}{\mathsf{Con}}
\newcommand{\Sub}{\mathsf{Sub}}
\newcommand{\Tel}{\mathsf{Tel}}
\newcommand{\Tms}{\mathsf{Tms}}
\newcommand{\p}{\mathsf{p}}
\newcommand{\q}{\mathsf{q}}
\newcommand{\ext}{\mathop{\triangleright}}
\newcommand{\N}{\mathbb{N}}
\newcommand{\lam}{\mathsf{lam}}
\newcommand{\app}{\mathsf{app}}
\newcommand{\U}{\mathsf{U}}
\newcommand{\El}{\mathsf{El}}
\newcommand{\cd}{\mathsf{c}}
\newcommand{\blank}{\mathord{\hspace{1pt}\text{--}\hspace{1pt}}} 
\renewcommand{\tt}{\mathsf{tt}}
\newcommand{\fst}{\mathsf{fst}}
\newcommand{\snd}{\mathsf{snd}}
\newcommand{\Lift}{\mathsf{Lift}}
\newcommand{\mk}{\mathsf{mk}}
\newcommand{\un}{\mathsf{un}}
\newcommand{\id}{\mathsf{id}}
\begin{document}
\maketitle

\begin{abstract}
Type theory can be described as a generalised algebraic theory. This
automatically gives a notion of model and the existence of the syntax
as the initial model, which is a quotient inductive-inductive
type. Algebraic definitions of type theory include Ehrhard's
definition of model, categories with families (CwFs), contextual
categories, Awodey's natural models, C-systems, B-systems. With the
exception of B-systems, these notions are based on a parallel
substitution calculus where substitutions form a category. In this
paper we define a single substitution calculus (SSC) for type theory
and show that the SSC syntax and the CwF syntax are isomorphic for a
theory with dependent function space and a hierarchy of universes. SSC
only includes single substitutions and single weakenings, and eight
equations relating these: four equations describe how to substitute
variables, and there are four equations on types which are needed to
typecheck the other equations. SSC provides a simple, minimalistic
alternative to parallel substitution calculi or B-systems for defining
type theory. SSC relates to CwF as extensional combinatory calculus
relates to lambda calculus: there are more models of the former, but
the syntaxes are equivalent. If we have some additional type formers,
we show that an SSC model gives rise to a CwF.
\end{abstract}

\section{Introduction}

What is type theory? Here we refer to type theory as a particular
formal system based on Martin-Löf's original definition
\cite{MARTINLOF197573}, and not to the study of type systems
(e.g.\ \cite{DBLP:books/daglib/0005958}).

In this paper, we will answer this question by defining type theory as
a generalised algebraic theory (GAT
\cite{DBLP:journals/apal/Cartmell86}). GATs are multi-sorted algebraic
theories where later sorts can be indexed by previous sorts. An
example is categories where the sort of morphisms is double-indexed
over objects.
The GAT presentation of type theory is intrinsic
\cite{DBLP:conf/csl/AltenkirchR99,DBLP:conf/popl/AltenkirchK16} rather
than extrinsic
\cite{alti:phd93,abel2013normalization,DBLP:conf/cpp/AdjedjLMPP24}:
this means that we only consider well-formed, well-scoped and
well-typed abstract syntax trees, there are no meaningless terms; in
the GAT presentation, the conversion relation is expressed by
equations of the algebraic theory.

The GAT presentation has the following advantages:
\begin{itemize}
\item In type theory, typing depends on conversion: in an extrinsic
  presentation, this is expressed by the rule deriving $\Gamma\vdash t
  : B$ from $\Gamma\vdash A \sim B$ and $\Gamma\vdash t : A$.  When
  conversion $\sim$ is equality, this rule holds by the general
  properties of equality; and we also avoid lots of boilerplate
  stating that e.g.\ conversion is a congruence with respect to all
  operations (for an alternative intrinsic presentation with explicit conversion
  relation see \cite{chapman09eatitself}).
\item Syntax and semantics are concordant: the syntax is simply the
  initial model, which always exists for any GAT and is called a
  quotient inductive-inductive type
  \cite{DBLP:journals/pacmpl/KaposiKA19}. The fact that the syntax is
  quotiented means that any function defined on it needs to respect
  the equations. For example, it is not possible to define a printing
  function which distinguishes the terms $(\lambda x\ldotp x)\,(\lambda y\ldotp y)$ and $\lambda y\ldotp y$,
  as these are convertible. However, it is still possible to define
  normalisation \cite{lmcs:4005} and typechecking \cite{typecheck} at
  this level of abstraction.
\item The GAT approach is also more abstract than extrinsic
  approaches, which means that we don't have to make ad-hoc
  choices. For example, we do not have to decide whether we want
  function space {\`a} la Curry or {\`a} la Church; whether we work
  with paranoid or economic typing rules \cite[Section
    5]{winterhalter:hal-01626651}; whether we have explicit or
  recursively defined substitution (instantiation of variables by
  terms, written $t[x\mapsto u]$ or $t[u/x]$).  These choices are
  forced by the notion of GAT: the domain of a lambda appears in its
  type, so we are always {\`a} la Church; algebraic theories don't
  allow recursively defined operators, so instantiation is an explicit
  operation; and so on.
\end{itemize}


However, even at the GAT level of abstraction, there are some choices
to be made about the instantiation operation. It is usually convenient
to use parallel substitutions, which means that the instantiation
operation replaces all variables in a context by terms in another
context \emph{at the same time}, that is, $t[x_0\mapsto
  u_0,\dots,x_n\mapsto u_n]$ where $x_0,\dots,x_n$ are the variables
in the context of $t$. We call $(x_0\mapsto u_0,\dots,x_n\mapsto u_n)$
a parallel substitution, where the terms $u_0,\dots,u_n$ are
all in the same context. Parallel substitutions are very natural in
the algebraic setting because they form the morphisms in a category
where objects are contexts (lists of types), then types/terms are
presheaves, context extension is a representability condition, so the
equations of the algebraic theory are forced by the categorical
structure. Some examples of algebraic definitions of type theory using
parallel substitutions ranging from the more syntactic to the more
semantic: Ehrhard's notion \cite{ehrhard,coquandEhrhard}, categories
with families (CwF \cite{DBLP:conf/types/Dybjer95,Castellan2021}),
natural models \cite{DBLP:journals/mscs/Awodey18}, contextual
categories \cite{DBLP:journals/apal/Cartmell86}, C-systems
\cite{DBLP:journals/lmcs/AhrensLV18}, locally cartesian closed
categories \cite{DBLP:journals/mscs/ClairambaultD14}, path categories
\cite{DBLP:journals/tocl/Berg18}.

In traditional extrinsic presentations
(e.g.\ \cite{alti:phd93,DBLP:books/daglib/0005958}) instantiation
replaces free occurrences of a single variable $x$ with a term $u$ as
in the notation $t[x\mapsto u]$. A substitution consists of a
variable name (a pointer into a context) and a term. Substitutions
cannot be composed, thus they do not form a category. To bridge the
gap between the intrinsic parallel substitution world and the
extrinsic single substitution world, Voevodsky introduced B-systems
\cite{bc} which are an algebraic
description of a single substitution calculus. B-systems involve
complex rules describing telescopes, weakenings and substitutions
under telescopes, with several equations.

In this paper, we introduce a new single substitution calculus (SSC)
which is simpler and more minimalistic than existing algebraic
approaches. The instantiation operation $\blank[\blank]$ takes either a
single substitution or a single weakening as an argument, and there
are eight equations explaining how these behave: four of these say how
to instantiate variables and the other four equations are needed to
typecheck the first four. The SSC shows that to explain type theory in
an algebraic way, there is no need for categories, parallel
substitutions or weakenings, empty substitution, telescopes, or
combinations of these. All our operations are easy to motivate, we
showcase this in Section \ref{sec:tt} which is a tutorial introduction
to our theory. We believe that our SSC is in a sweet spot: our
notations are close to traditional extrinsic notations, but they are
algebraic, thus come with a well behaved model theory, and can be
easily related to more semantic descriptions.

Our SSC is in some sense too minimalistic: for example, the equation
$b[\p^+][\langle\q\rangle] = b$ is not \emph{derivable} in any model,
but it is \emph{admissible} in the syntax. (Here $b : \Tm\,(\Gamma\ext
A)\,B$ is a term which depends on a context $\Gamma$ and an extra
variable of type $A$; $b[\p^+] : \Tm\,(\Gamma\ext A\ext
A[\p])\,(B[\p^+])$ is a version where we weakened \emph{before} the
last variable; in $b[\p^+][\langle\q\rangle] : \Tm\,(\Gamma\ext A)\,B$
we substituted the last variable for the previous one.)  This is
analogous to e.g.\ parametricity results
\cite{DBLP:journals/jfp/BernardyJP12}, which do not hold in an
arbitrary model, but they hold in the syntax. In this paper, we show
that the \emph{syntaxes} of SSC and the corresponding parallel
substitution calculus (CwF-based calculus) are isomorphic. All
CwF-based models are models of SSC, but not
necessarily the other way. While B-systems are equivalent to contextual CwFs (and
C-systems \cite{bc}), there are more
SSC models than CwF-based models. This relationship is like that
of extensional combinatory calculus and lambda calculus, where the
syntaxes are equivalent, and every lambda model is a combinatory
model, but not the other way
\cite{DBLP:conf/fscd/AltenkirchKSV23}. Another example is the
relationship between monoids over a set $X$ and nil/cons algebras over
$X$: every monoid is a nil/cons-algebra, but not the other way;
it is well-known that free monoids over $X$ (syntax of monoid over
$X$) are the same as $X$-lists (the syntax of nil/cons algebras
over $X$).

We introduce a technology for proving admissible rules. We define $\alpha$-normal forms and show that every term admits an $\alpha$-normal form.
Thus, proving something for $\alpha$-normal forms implies it for all terms.
Just as $\beta$-normal forms
don't distinguish between $\beta$-convertible terms, $\alpha$-normal
forms don't include explicit instantiations. $\alpha$-normal forms are
not like ordinary $\beta/\eta$-normal forms because they are 
quotiented by the computation/uniqueness rules like $\beta$/$\eta$ for
function space. We use induction on $\alpha$-normal forms to define
parallel substitutions and prove all CwF rules in the syntax
of SSC.

Our SSC has 8 equations, and these are enough to describe type
theories with arbitrary choice of type formers (see the paragraph on
SOGATs in Subsection \ref{sec:related}). In the particular case of
type theory with a Coquand-universe and $\Pi$ types, we can do better:
4 (conditional) equations are enough, because we can derive the
other 4 equations using the fact that every type is represented by a
term.

If in addition to having a universe and $\Pi$ types, our theory
also supports unit and $\Sigma$ types, we can actually derive a
CwF-based model from an SSC-model. The idea is that $\Sigma$ types can
represent contexts, functions between $\Sigma$ types represent
parallel substitutions, functions into the universe represent
dependent types. Then the SSC is just there for bootstrapping
purposes: SSC formulates $\Sigma$, $\Pi$, $\U$, and then we use these
to define a parallel substitution calculus. In summary, although SSC
models without any type formers are weaker than CwFs, if we equip them
with enough type formers, the SSC model gives rise to a CwF. Doing a
roundtrip CwF $\longrightarrow$ SSC $\longrightarrow$ CwF results in a
contextually isomorphic CwF.

All of the constructions in this paper can be understood as happening
internal to extensional type theory, and most results were formalised
in Agda (see Subsection \ref{sec:metatheory}). Some results in this
paper were presented at the TYPES 2024 conference \cite{singleTypes}.

In summary, our contributions are (in the same order as the structure of the paper):
\begin{description}
\item[Section \ref{sec:tt}] A new generalised algebraic presentation of type theory in the
  form of a minimalistic single substitution calculus. Our calculus
  does not feature parallel substitutions/weakenings, empty
  substitution, telescopes. We present our calculus in an
  easy-to-understand way where all operations are well-motivated.
\item[Section \ref{sec:admissible}] The $\alpha$-normalisation technique which shows that the syntax
  of our single substitution calculus is isomorphic to the syntax of
  the CwF-based theory.
\item[Section \ref{sec:minimisation}] For type theory with $\Pi$, $\U$, a minimised presentation of
  the equations which results in an isomorphic theory.
\item[Section \ref{sec:cwf}] For type theory with $\Pi$, $\U$, $\top$,
  $\Sigma$, we show that a CwF structure is derivable.
\end{description}

%

\subsection{Related work}\label{sec:related}

\subparagraph*{B-systems.} B-systems introduced by Voevodsky \cite{voevodsky2014bsystems}
are an intrinsic, essentially
algebraic presentation of type theory using single
substitutions. B-systems relate to our SSC as essentially
algebraic theories relate to generalised algebraic theories, or
sets with a map into $I$ relate to indexed families over $I$
\cite[page 221]{DBLP:journals/apal/Cartmell86}. Another difference is
that we have fewer and less general equations, resulting in the fact
that we have more models than B-systems. A further difference is that contexts
in B-systems are inductively defined (we call such models \emph{contextual}).
However in the syntax of our
theory, all the rules of B-systems are admissible. We describe the
relationship in more detail in Appendix \ref{app:bsys}.
Kaposi and Luksa \cite{luksa} defined a telescopic SSC for simple type theory, it can be seen as a simply typed
version of B-systems. They show that the category of contextual models
of their calculus is equivalent to the category of contextual simply
typed CwFs.
If we have function space, the same equivalence holds for the simply typed version of our
substitution calculus (\href{https://szumixie.github.io/single-subst/STT.Contextual.html}{\Agda}).

\vspace{-1em}
\subparagraph*{Ehrhard's calculus.}
The first generalised algebraic presentation of type theory was
Ehrhard's calculus \cite{ehrhard,coquandEhrhard} which featured a
parallel substitution calculus with $\p$, $\blank^+$ and
$\langle\blank\rangle$ operations, just like our calculus. Our
single substitution calculus is essentially Ehrhard's calculus with
the categorical composition and identity operations
removed. Categories with families (CwFs
\cite{DBLP:conf/types/Dybjer95,Castellan2021}) feature $\p$, $\q$,
$(\blank,\blank)$ operations which make it more apparent that
substitutions are lists of terms. CwFs are equivalent to Ehrhard's
calculus and the natural models of Awodey \cite{DBLP:journals/mscs/Awodey18}. Contextual versions of these are equivalent to contextual categories/C-systems
\cite{DBLP:journals/apal/Cartmell86,DBLP:journals/lmcs/AhrensLV18}, and
B-systems \cite{bc}. For the precise statements of equivalences, see e.g.\ \cite{DBLP:conf/aplas/AhrensLN24}.

\vspace{-1em}
\subparagraph*{Formalisations of type theory.}
Most computer formalisations of type theory are extrinsic
\cite{DBLP:journals/pacmpl/0001OV18,DBLP:conf/cpp/AdjedjLMPP24,DBLP:journals/jar/SozeauABCFKMTW20}:
they define the syntax as abstract syntax trees, and equip it with
typing and conversion relations. A higher level representation is
working in setoid hell \cite{chapman09eatitself,setoidhell}: terms are
intrinsically well-typed, but conversion is still an explicit
relation. For formalising the GAT-level syntax, one needs a stronger
metatheory than plain Agda or Coq: the metatheory has to support
quotient inductive-inductive types (QIITs
\cite{DBLP:journals/pacmpl/KaposiKA19}), in other words, initial
models of GATs. Altenkirch and Kaposi
\cite{DBLP:conf/popl/AltenkirchK16} formalised the syntax of type
theory using postulated QIITs in Agda, together with parametricity and
normalisation \cite{lmcs:4005} proofs. Brunerie and de Boer
\cite{initiality-agda} constructed the initial contextual category in
Agda using postulated quotients. Although we don't have a general
proof, we have experimental evidence that QIITs are supported by the
cubical set model \cite{DBLP:conf/lics/CoquandHM18} and the setoid
model \cite{kaposi-qiit-setoid}, thus Cubical Agda
\cite{DBLP:journals/jfp/VezzosiMA21} and setoid/observational type
theory \cite{DBLP:conf/mpc/AltenkirchBKT19,DBLP:phd/hal/Pujet22} support QIITs. For example,
in Cubical Agda, set-truncated and groupoid-truncated syntaxes of type
theory have been formalised \cite{cohtt}. Without QIITs, tricks such
as shallowly embedding the syntax can be used to formalise metatheoretic
results about type theory such as gluing
\cite{glue}, and its special cases
canonicity and parametricity \cite{DBLP:conf/mpc/KaposiKK19}.

\vspace{-1em}
\subparagraph*{SOGATs}
We can define languages with binders as second-order theories
following higher-order abstract syntax \cite{DBLP:conf/lics/Hofmann99}
and logical frameworks
\cite{DBLP:journals/jacm/HarperHP93,beluga}. Second-order
generalised algebraic theories (SOGATs
\cite{uemura,DBLP:conf/fscd/KaposiX24}) can be used to present type
theory more abstractly than GATs: SOGATs abstract over the exact
definition of the substitution calculus (whether substitutions are
single or parallel; whether we have CwF-style or Ehrhard-style
operations): contexts, substitutions, instantiations are simply
modelled by the metatheoretic function space. For example, a type
theory with $\Pi$, Coquand-universes \cite{coquandUniverse} and a universe level lifting
operation is described by the following SOGAT (where $\cong$ means
isomorphism):
\begin{equation}\label{eq:tt}
\begin{alignedat}{10}
  & \Ty && : \N\ra\Set                                                             && \U && : (i:\N)\ra\Ty\,(1+i) \\           
  & \Tm && : \Ty\,i\ra\Set                                                         && \El && : \Tm\,(\U\,i)\cong\Ty\,i : \cd \\
  & \Pi && : (A:\Ty\,i)\ra(\Tm\,A\ra\Ty\,i)\ra\Ty\,i                               && \Lift && : \Ty\,i\ra\Ty\,(1+i) \\        
  & \app && : \Tm\,(\Pi\,A\,B)\cong((a:\Tm\,A)\ra\,\Tm\,(B\,a)): \lam \hspace{3em} && \un && : \Tm\,(\Lift\,A)\cong\Tm\,A : \mk
\end{alignedat}
\end{equation}
Second-order models of SOGATs are not well-behaved, as there is no
good notion of homomorphism between them \cite[bottom of page 5]{DBLP:conf/fscd/KaposiX24}. This is why Kaposi and Xie
\cite{DBLP:conf/fscd/KaposiX24} translate SOGATs to GATs: a model of a
SOGAT then is a model of the GAT that we obtain by their
translation. GATs come with well-behaved metatheory: a category of
models with initial model (syntax), (co)free models
\cite{andras,DBLP:phd/hal/Moeneclaey22}, and so on. However, the
translation is not unique: Kaposi and Xie
\cite{DBLP:conf/fscd/KaposiX24} define two different translations, one
based on parallel substitutions and one based on single
substitutions. If we apply their parallel translation to the SOGAT
\vspace{-0.5em}
\begin{equation}\label{eq:tytm}
\begin{alignedat}{10}
  & \Ty && : \Set \hspace{3em} \Tm && : \Ty \ra\Set
\end{alignedat}
\end{equation}
with only two sorts and no operations or equations, we obtain the GAT
of categories with families (CwFs). If we apply their single substitution
translation to the SOGAT (\ref{eq:tt}), we obtain Definition
\ref{def:tt} (also listed in Appendix \ref{app:tt}). The current paper
can be seen as taking a big enough type theory (described as a SOGAT),
and investigating how the result of its single substitution
translation relates to the result of its parallel translation (the
latter first-order theory is quite well-known).

There are techniques to prove properties of type theories without
choosing between parallel or single substitutions, and staying at the
SOGAT level of abstraction: Synthethic Tait Computability
\cite{DBLP:phd/us/Sterling22} and internal sconing
\cite{DBLP:conf/fscd/BocquetKS23}.

\vspace{-0.5em}
\subsection{Metatheory and formalisation}\label{sec:metatheory}

Our metatheory is observational type theory
\cite{DBLP:phd/hal/Pujet22} with quotient inductive-inductive types
(QIITs). On paper we usually omit writing coercions, so our notation
is close to extensional type theory. Definition \ref{def:tt} and Section
\ref{sec:admissible} of this paper are formalised in the proof
assistant Agda. The formalisation is available online
(\href{https://szumixie.github.io/single-subst/TT.index.html}{\Agda}). The Agda
logos next to definitions/theorems are links to their formalised counterparts. We use a strict Prop-valued
\cite{DBLP:journals/pacmpl/GilbertCST19} equality type with postulated
coercion rule (transport rule) and postulated QIITs with computation
rules added using rewrite rules
\cite{DBLP:journals/pacmpl/CockxTW21}. Thus, we work in a setting of
uniqueness of identity proofs (UIP), and our metatheory is
incompatible with homotopy type theory \cite{HoTTbook}. Our notation
on paper is close to Agda's: we write $\Pi$ types as $(x:A)\ra B$,
$\Sigma$ types as $(x:A)\times B$, we use implicit arguments and
overloading extensively, for example instantiation $\blank[\blank]$ is
overloaded for types and terms. For isomorphisms, we use the notation
$f:X\cong Y:g$ meaning $f:X\ra Y$ and $g :Y\ra X$ together with a
$\beta$ equality $f\,(g\,y) = y$ for all $y$, and an $\eta$ equality
saying $g\,(f\,x) = x$ for all $x$ (we think of $f$ as a destructor
and $g$ as a constructor). Following Voevodsky \cite{voevodsky2015csystemdefineduniversecategory}, we call
a proof relevant theorem a \emph{problem} and its proof a \emph{construction}.
The words theorem, lemma, proof refer to propositions.

\vspace{-0.5em}
\section{Single substitution syntax}
\label{sec:tt}

In this section, we introduce the syntax of type theory with function
space and universes in a minimalistic way, only introducing operations
that are unavoidable. We eschew boilerplate by only defining
well-formed, well-scoped, well-typed terms that are quotiented by
conversion. This means that the equalities that hold between terms are
the ways we can convert terms into each other when running them as
programs. We give a tutorial-style introduction to the syntax, we do
not assume prior knowledge of the metatheory of type theory.

First of all, we need a sort of terms which have to be indexed by
types because we only want well-typed terms. Both types and terms can
include variables, and to keep track of the currently available
variables, we also index them by contexts: a context is a list of
types, the length is the number of available variables, and we add new
variables at the end (that is, they are snoc-lists rather than
cons-lists). Types are also indexed by their universe level, this is a
technical requirement for avoiding Russell's paradox. The index $i$ is
an implicit argument of $\Tm$.
\begin{equation*}
\Con : \Set \hspace{3em} 
\Ty : \Con\ra\N\ra\Set \hspace{3em} 
\Tm : (\Gamma:\Con)\ra\Ty\,\Gamma\,i\ra\Set
\end{equation*}
Just as lists have two constructors, there are two ways to form
contexts: the empty context $\diamond$ and context extension $\ext$
which is like the snoc operation for lists. Context extension takes a
type which can have variables in the context preceding the type.
\[
\diamond : \Con \hspace{3em}
\blank\ext\blank : (\Gamma:\Con)\ra\Ty\,\Gamma\,i\ra\Con
\]
We will refer to variables by their distance from the end of the
context. We would like to describe the operation providing the last
variable in a context, first of all we need an operation providing the last
variable in the context. This is the zero De Bruijn index \cite{DEBRUIJN1972381}, which we
denote by $\q : \Tm\,(\Gamma\ext A)\,A'$. It is a term in an
extended context, and it should have type $A$, but the issue is that
$A : \Ty\,\Gamma$, and $A' : \Ty\,(\Gamma\ext A)$. We need a way to
weaken the type $A$ to obtain such an $A'$. For this, we will
introduce an operation $\blank[\p] : \Ty\,\Gamma\,i\ra\Ty\,(\Gamma\ext
A)\,i$ and define $A' := A[\p]$. Instead of introducing just
$\blank[\p]$, we generalise a bit and add a new sort $\Sub$. For
now, $\Sub$ only has the single element $\p$ and we introduce the operation
$\blank[\blank]$ called \emph{instantiation}. Elements of $\Sub$
will later be called substitutions, hence the name, but for now, we
only have single end-of context weakenings in $\Sub$. The context
$\Delta$ is called the domain, $\Gamma$ the codomain of a
$\Sub\,\Delta\,\Gamma$.
\begin{alignat*}{10}
  & \Sub && : \Con\ra\Con\ra\Set &&\hspace{1.5em}
  & \blank[\blank] && : \Ty\,\Gamma\,i\ra\Sub\,\Delta\,\Gamma\ra\Ty\,\Delta\,i \hspace{1.5em}
  \p && : \Sub\,(\Gamma\ext A)\,\Gamma \hspace{1.5em}
  && \q && : \Tm\,(\Gamma\ext A)\,(A[\p])
\end{alignat*}
The operations $\p$ and $\q$ take three implicit parameters, $\Gamma$,
$i$ and $A$, while $\blank[\blank]$ takes $\Gamma$, $i$ and $\Delta$
implicitly. Still, we only have the last variable $\q$ in the context,
we don't have e.g.\ the last but one variable in $\Tm\,(\Gamma\ext
A\ext B)\,(A[\p][\p])$ where we had to weaken $A$ twice to make it fit
with its context. To obtain more variables, we also allow weakening of
terms, more precisely, we introduce an instantiation operation for
terms with an overloaded notation.
\begin{alignat*}{10}
& \blank[\blank] && : \Tm\,\Gamma\,A\ra(\gamma:\Sub\,\Delta\,\Gamma)\ra\Tm\,\Delta\,(A[\gamma])
\end{alignat*}
Note that this is a dependent function as the type of the resulting
term has to be weakened the same way as the term itself. Now we can
define all variables counting from the end of the context (De Bruijn
indices): $0 := \q$, $1 := \q[\p]$, $2 := \q[\p][\p]$,
$3 := \q[\p][\p][\p]$, and so on.

Next we add dependent function space $\Pi$: the domain of a dependent
function is a type in some context $\Gamma$, and the codomain can also
refer to a variable in the domain, so we extend the context of the
codomain type with the type of the domain. For simplicity, both types
are at the same level (this can be remedied using $\Lift$, see later).
\begin{alignat*}{10}
& \Pi && : (A:\Ty\,\Gamma\,i)\ra\Ty\,(\Gamma\ext A)\,i\ra\Ty\,\Gamma\,i
\end{alignat*}
The nondependent function space $\blank\Ra\blank :
\Ty\,\Gamma\,i\ra\Ty\,\Gamma\,i\ra\Ty\,\Gamma\,i$ is a special case of $\Pi$
and we define it as the abbreviation $A\Ra B := \Pi\,A\,(B[\p])$.

Because we introduced weakening, we now have to explain what happens
to a $\Pi$ type once we weaken it. Assume $A : \Ty\,\Gamma\,i$,
$B:\Ty\,(\Gamma\ext A)\,i$ and we have $\p:\Sub\,(\Gamma\ext
C)\,\Gamma$. Then we say $(\Pi\,A\,B)[\p] = \Pi\,(A[\p])\,(B[\p'])$,
but $B$ cannot be weakened by $\p$ because there we need $\p' :
\Sub\,(\Gamma\ext C\ext A[\p])$ $(\Gamma\ext A)$. So $p'$ has to be a 
new kind of weakening which adds a new variable in the last but one
position. But after introducing $\p'$, we would need another equation computing $(\Pi\,A\,B)[\p']$,
introducing a new weakening which adds a new variable in the last but
two position, and so on. We solve all of these issues by allowing the
\emph{lifting} of a weakening: $\gamma^+$ will be the same as the
weakening $\gamma : \Sub\,\Delta\,\Gamma$, except that it will add an
extra variable both at the end of the domain and codomain
context.
\begin{alignat*}{10}
& \blank^+ && : (\gamma:\Sub\,\Delta\,\Gamma)\ra\Sub\,(\Delta\ext A[\gamma])\,(\Gamma\ext A)
\end{alignat*}
Note that $A$ is an implicit argument of $\blank^+$, and that it has
to be instantiated by the weakening $\gamma$ in the domain context in
order to fit in. Now we explain how instantiation acts on $\Pi$ by
the following equation.
\begin{alignat*}{10}
& \Pi[] && : (\Pi\,A\,B)[\gamma] = \Pi\,(A[\gamma])\,(B[\gamma^+])
\end{alignat*}
The above equation has 6 implicit arguments, namely $\Gamma$, $i$,
$A$, $B$, $\Delta$ and $\gamma$.  This rule works for any weakening,
no matter how many $\blank^+$s have been applied to it. As of now, all
elements of $\Sub$ have the form
$
\p^{+^n} : \Sub\,(\Gamma\ext A\ext B_1[\p]\ext B_2[\p^+]\ext\dots\ext B_n[\p^{+^{n-1}}])\,(\Gamma\ext B_1\ext B_2\ext\dots\ext B_n),
$
where $^{+^n}$ means the $n$-times iteration of $\blank^+$.

Now that we have weakenings of the form $\gamma^+$, we have to say how
they act on variables, that is, terms of the form
$\q[\p]\dots[\p]$. We express this using two rules: we say what
$\q[\gamma^+]$ computes to, and what $b[\p][\gamma^+]$ computes to
where $b$ is an arbitrary term. $\q[\gamma^+]$ should be the same as
$\q$ (with different implicit arguments as the $\q$ in
$\q[\gamma^+]$), as the weakening happens somewhere in the middle of
the context, so the index of the variable remains unchanged. However,
assuming $\q:\Tm\,(\Gamma\ext A)\,(A[\p])$, we have $\q[\gamma^+] :
\Tm\,(\Delta\ext A[\gamma])\,(A[\p][\gamma^+])$, but in the same
context, we have $\q : \Tm\,(\Delta\ext A[\gamma])\,(A[\gamma][\p])$,
hence the terms in the two sides of the equation $\q[\gamma^+] = \q$
have different types. But these two types should be the same:
weakening a type at the end of the context, and then applying another
lifted weakening should be the same as first weakening somewhere and
then weakening at the end. We first assume this equation for types,
then the equation $\q[\gamma^+] = \q$ becomes well-typed (in the
metatheory). The equation for $b[\p][\gamma^+]$ has the same shape as
the newly assumed rule for types. Thus we add the following three
equations.
\[
[\p][^+] : B[\p][\gamma^+] = B[\gamma][\p] \hspace{3em}
[\p][^+] : b[\p][\gamma^+] = b[\gamma][\p] \hspace{3em}
\q[^+] : \q[\gamma^+] = \q
\]
The second and third equations only make sense because of the first
equation. The phenomenon that later equations/operations depend on
previous equations is common in GATs.

Now that we have $[\p][^+]$ for types, we can derive its instantiation rule by
$
(A\Ra B)[\gamma] =
\Pi\,A\,(B[\p])[\gamma] \overset{\Pi[]}{=}
\Pi\,(A[\gamma])\,(B[\p][\gamma^+]) \overset{[\p][^+]}{=}
\Pi\,(A[\gamma])\,(B[\gamma][\p]) =
(A[\gamma])\Ra(B[\gamma]).
$

So far our only terms are variables, but we would like to define
functions via lambda abstraction. Abstraction takes a term in a
context extended by the domain of the function. It comes with a rule
for instantiation analogous to $\Pi[]$.
\[
\lam : \Tm\,(\Gamma\ext A)\,B\ra\Tm\,\Gamma\,(\Pi\,A\,B) \hspace{3em}
\lam[] : (\lam\,b)[\gamma] = \lam\,(b[\gamma^+])
\]
Note that the equation $\lam[]$ only makes sense because of the
previous equation $\Pi[]$: the left hand side is in
$\Tm\,\Delta\,(\Pi\,A\,B[\gamma])$, in the right hand side,
$b[\gamma^+] : \Tm\,(\Delta\ext A[\gamma])\,(B[\gamma^+])$, hence the
right hand side is in
$\Tm\,\Delta\,(\Pi\,(A[\gamma])\,(B[\gamma^+]))$.

The functions $\lambda x\ldotp x$ and $\lambda x\,y\ldotp x$ can be defined in our
syntax as $\lam\,\q : \Tm\,\Gamma\,(A\Ra A)$ and
$\lam\,(\lam\,(\q[\p])) :
\Tm\,\Gamma\,\big(\Pi\,A$ $(B\Ra(A[\p]))\big)$ which make sense for all
$\Gamma$, $A$, $B$.

The application operation for dependent function space is a bit
tricky, because the return type depends on the input: the argument of
the function will appear in the return type. We write application by
an infix $\blank\cdot\blank :
\Tm\,\Gamma\,(\Pi\,A\,B)\ra(a:\Tm\,\Gamma\,A)\ra\Tm\,\Gamma\,B'$,
where $B' : \Ty\,\Gamma$ should be $B:\Ty\,(\Gamma\ext A)$ where the
last variable is \emph{substituted} (instantiated) by $a$. For this,
we introduce a new element of $\Sub$ called single substitution which
goes the opposite way of $\p$. Now we can use instantiation
$\blank[\blank]$ to substitute the last variable in $B$.
\[
\langle\blank\rangle : \Tm\,\Gamma\,A\ra\Sub\,\Gamma\,(\Gamma\ext A) \hspace{3em}
\blank\cdot\blank : \Tm\,\Gamma\,(\Pi\,A\,B)\ra(a:\Tm\,\Gamma\,A)\ra\Tm\,\Gamma\,(B[\langle a\rangle])
\]
We could have introduced new, separate sorts and $\blank[\blank]$
operations for weakenings and substitutions, but we merge them for
simplicity. There is no need for separation: just as weakenings can be
lifted, single substitutions can also be lifted, and the
weakening-rules $\Pi[]$, $\lam[]$ also work for single
substitutions. We won't have more ways to introduce elements of
$\Sub$, and there are no equations on $\Sub$,
as the equations relating substitutions are defined for their action
on types and terms. An element of $\Sub$ is
either a single weakening $\p$ lifted a finite number of times, or a single
substitution lifted a finite number of times. We call elements of $\Sub$
substitutions for simplicity.

With the introduction of $\blank\cdot\blank$, we need a new
substitution rule, but again it only makes sense with an extra
equation on types saying that first substituting the last variable and
then an arbitrary substitution is the same as first the lifted version
of the arbitrary substitution that does not touch the last variable,
and then substituting the last variable.
\[
[\langle\rangle][] : B[\langle a\rangle][\gamma] = B[\gamma^+][\langle a[\gamma]\rangle] \hspace{2em}
{\cdot}[] : (t\cdot a)[\gamma] = (t[\gamma])\cdot(a[\gamma])
\]
Following the introduction of the operation $\langle\blank\rangle$, we
need to explain how it acts on variables. Given a variable in the
middle of the context (a term $b[\p]$), substituting its last variable
simply returns $b$. Substituting the last variable $\q$ reads out the
term from $\langle\blank\rangle$. As usual, the rules only typecheck
if we have an equation for types.
\[
[\p][\langle\rangle] : B[\p][\langle a\rangle] = B \hspace{3em}
[\p][\langle\rangle] : b[\p ][\langle a\rangle] = b \hspace{3em}
\q[\langle\rangle] : \q[\langle a\rangle] = a
\]
We don't have a separate sort of variables, so the rule
$[\p][\langle\rangle]$ holds not only for variables, but for arbitrary
terms. This is not an issue, as weakening a term at the end of its
context, and then substituting the newly introduced variable is the
same as not doing anything.

We add the computation and uniqueness rule for function space, where
the uniqueness rule again needs an extra equation on the codomain type
of the function to make sense (we mentioned the term version of this equation in the introduction).
\[
\Pi\beta : \lam\,b\cdot a = b[\langle a\rangle] \hspace{3em}
[\p^+][\langle\q\rangle] : B[\p^+][\langle\q\rangle] = B \hspace{3em}
\Pi\eta : t = \lam\,(t[\p]\cdot\q)
\]

We add the rules for universes. A universe is a type containing codes
for types, this is witnessed by $\El$ (elements) and $\cd$ (code),
which make an isomorphism between terms of type $\U\,i$ and types of
level $i$. All three operations come with substitution rules.
\begin{alignat*}{10}
  & \U && : (i:\N)\ra\Ty\,\Gamma\,(1+i) \hspace{1.5em} && \El && : \Tm\,\Gamma\,(\U\,i) \ra \Ty\,\Gamma\,i                          && \cd && : \Ty\,\Gamma\,i\ra\Tm\,\Gamma\,(\U\,i) \hspace{1.5em} && \U\beta && : \El\,(\cd\,A) = A \\
  & \U[] && : (\U\,i)[\gamma] = \U\,i                  && \El[] && : (\El\,\hat{A})[\gamma] = \El\,(\hat{A}[\gamma]) \hspace{1.5em} && \cd[] && : (\cd\,A)[\gamma] = \cd\,(A[\gamma])                && \U\eta && : \cd\,(\El\,\hat{A}) = \hat{A}
\end{alignat*}
Finally, we add the rules for moving types one level up. Because $\Pi$
needs two types at the same level, without lifting, we cannot even
define the polymorphic identity function.
\begin{alignat*}{10}
  & \Lift : \Ty\,\Gamma\,i\ra\Ty\,\Gamma\,(1+i) \hspace{.9em} && \mk : \Tm\,\Gamma\,A \ra \Tm\,\Gamma\,(\Lift\,A) \hspace{.9em} && \un : \Tm\,\Gamma\,(\Lift\,A) \ra \Tm\,\Gamma\,A \hspace{.9em} && \un\,(\mk\,a) = a \\
  & (\Lift\,A)[\gamma] = \Lift\,(A[\gamma])                   && (\mk\,a)[\gamma] = \mk\,(a[\gamma])                            && (\un\,a)[\gamma] = \un\,(a[\gamma])                            && \mk\,(\un\,a) = a
\end{alignat*}
Note that we don't have definitional commutation of type formers and
$\Lift$ such as $\Lift\,(A\Ra B) = (\Lift\,A\Ra\Lift\,B)$. It however
holds as a definitional isomorphism (see Subsection
\ref{sec:examples}). This adds extra administration when using our
theory, but makes stating our theory simpler, and includes more
models. The situation is similar to the correspondence between
$\Ty\,\Gamma\,i$ and $\Tm\,\Gamma\,(\U\,i)$ which could have been
stated as an equality (making universes {\`a} la Russell).

\begin{definition}[Single substitution calculus with $\Pi$, $\U$, $\Lift$ (\href{https://szumixie.github.io/single-subst/TT.SSC.Syntax.html}{\Agda})]\label{def:tt}
This concludes the definition of our basic type theory. For
reference, we list the same rules again in Appendix \ref{app:tt}.
\end{definition}

As a sanity check for our notion of SSC-model, we defined the standard (metacircular\slash set/type) model \cite{DBLP:conf/popl/AltenkirchK16} of the
SSC-based calculus in Agda (\href{https://szumixie.github.io/single-subst/TT.SSC.Standard.html}{\Agda}).

The theory can be extended with new type and term formers in the same
fashion: each operation has to be indexed over contexts and come
 with a substitution rule. There is no need to add more
structural (substitution calculus) rules.\footnote{In \cite{DBLP:conf/fscd/KaposiX24},
the same structural rules suffice to describe the first-order theory for any SOGAT.}
For example, we show how to
extend Definition \ref{def:tt} with $\top$ and $\Sigma$ types.
\begin{definition}[Single substitution calculus with $\Pi$, $\U$, $\Lift$, $\top$, and $\Sigma$ types]\label{def:sigma}
  We extend Definition \ref{def:tt} with the following rules.
  \begin{alignat*}{10}
    & \rlap{$
  \top : \Ty\,\Gamma\,0 \hspace{3.9em}
  \top[] : \top[\gamma] = \top\hspace{3.9em}
  \top\eta : t = \tt \hspace{3.9em}
  \tt : \Tm\,\Gamma\,\top \hspace{3.9em}
  \tt[] : \tt[\gamma] = \tt
      $} \\[0.5em]
    & \Sigma && : (A:\Ty\,\Gamma\,i)\ra\Ty\,(\Gamma\ext A)\,i\ra\Ty\,\Gamma\,i && \Sigma[] && : (\Sigma\,A\,B)[\gamma] = \Sigma\,(A[\gamma])\,(B[\gamma^+]) \\
    & \fst,\snd && : \Tm\,\Gamma\,(\Sigma\,A\,B)\cong(a:\Tm\,\Gamma\,A)\times\Tm\,\Gamma\,(B[\langle a\rangle]) : \blank{,}\blank \hspace{3.2em} && {,}[] && : (a,b)[\gamma] = (a[\gamma],b[\gamma])
  \end{alignat*}
  To save space, we compressed the introduction, elimination, $\beta$
  and $\eta$ rules of $\Sigma$ into an isomorphism, and did not list
  substitution law for $\fst$, $\snd$ as they are derivable (see
  below).
\end{definition}

\subsection{Examples}\label{sec:examples}

In this subsection, we show how to use the above defined calculus, and
illustrate its deficiencies: some important equations are not
derivable in any model, only admissible in the syntax. These equations
will be proven in Section \ref{sec:admissible}.

The polymorphic identity function for types at the level 0 is defined
as
\[
\lam\,(\lam\,\q) : \Tm\,\diamond\,(\Pi\,(\U\,0)\,(\Lift\,(\El\,\q)\Ra\Lift\,(\El\,\q))).
\]
Note that readability of this term essentially relies on implicit
arguments. For example, $\q$ takes $(\diamond\ext\U\,0)$ as its first,
$1$ as its second and $\Lift\,(\El\,\q)$ as its third implicit
argument, and these arguments themselves are terms written using
implicit arguments. Intrinsically typed terms are the same as
derivation trees, we illustrate this by deriving this term as follows.
\[
\infer[\text{(*)}]{\lam\,(\lam\,\q) : \Tm\,\diamond\,\big(\Pi\,(\U\,0)\,(\Lift\,(\El\,\q)\Ra\Lift\,(\El\,\q))\big)}{\infer{\lam\,\q : \Tm\,(\diamond\ext\U\,0)\,(\Lift\,(\El\,\q)\Ra\Lift\,(\El\,\q))}{\infer{\q : \Tm\,(\diamond\ext\U\,0\ext\Lift\,(\El\,\q))\,(\Lift\,(\El\,\q)[\p])}{}}}
\]
Note that in step (*), $\lam$ made it sure that $\U\,0$ and
$\Lift\,(\El\,\q)\Ra\Lift\,(\El\,\q)$ are types at the same level,
which is why we had to lift $\El\,\q$. Another subtlety is
happening when deriving the third implicit argument of $\q$, because we have
to coerce $\q$ along the equality $\U[]$:
\[
\infer{\Lift\,(\El\,\q) : \Ty\,(\diamond\ext\U\,0)\,1}{\infer{\El\,\q : \Ty\,(\diamond\ext\U\,0)\,0}{\infer{\q : \Tm\,(\diamond\ext\U\,0)\,(\U\,0)}{\infer{\q : \Tm\,(\diamond\ext\U\,0)\,((\U\,0)[\p])}{} && \infer{\U[] : (\U\,0)[\p] = \U\,0}{}}}}
\]
When working informally (or in extensional type theory), we don't
write coercions.

Given a type $A:\Ty\,\diamond\,0$, and $a : \Tm\,\diamond\,A$, we have
$
\lam\,(\lam\,\q)\cdot\cd\,A\cdot a \overset{\Pi\beta}{=}
\lam\,\q[\langle\cd\,A\rangle]\cdot a \overset{\lam[]}{=}
\lam\,(\q[\langle\cd\,A\rangle^+])\cdot a\, \overset{[\q][^+]}{=}
\lam\,\q\cdot a \overset{\Pi\beta}{=}
\q[\langle a\rangle] \overset{\q[\langle\rangle]}{=}
a.
$
If we want to specify open inputs $A':\Ty\,(\diamond\ext C)\,0$, and
$a' : \Tm\,(\diamond\ext C)\,A'$, we have to weaken our function to
accept them:
$
\lam\,(\lam\,\q)[\p] : \Tm\,(\diamond\ext C)\,(\Pi\,(\U\,0)\,(\Lift\,(\El\,\q)\Ra\Lift\,(\El\,\q))[\p]).
$
But we have
$
\lam\,(\lam\,\q)[\p] \overset{\lam[]\,2\text{x}}{=} \lam\,(\lam\,(\q[\p^{++}])) \overset{\q[^+]}{=} \lam\,(\lam\,\q)
$
(with different implicit arguments for the $\lam$s on the left and right hand side), so we are able to apply $\cd\,A'$
and $a'$ just as before. What we cannot derive is to directly weaken a
term in context $\diamond$ to an arbitrary context $\Gamma$, where
$\Gamma$ is a metavariable. If we know that $\Gamma$ has length $n$,
then we can do the weakening by applying $\blank[\p]$ to it $n$ times.
If we work
in the syntax, then we can derive such weakenings (and more) using the
methods of Section \ref{sec:admissible}.

The operations for $\U$ and $\Lift$ can be written more concisely as
the following isomorphisms:
\[
\El:\Tm\,\Gamma\,(\U\,i) \cong \Ty\,\Gamma\,i:\cd \hspace{5em} \un:\Tm\,\Gamma\,(\Lift\,A)\cong\Tm\,\Gamma\,A:\mk
\]
The $\U\beta$, $\U\eta$ and $\Lift\beta$, $\Lift\eta$ rules express
that the round-trips are identities. These isomorphisms are also
\emph{natural}, or compatible with substitutions: these are expressed
by the equations $\El[]$, $\un[]$, $\cd[]$, $\mk[]$.

$\Sigma$ types were introduced by a similar isomorphism, and we give
names to the round-trips equations according to our convention introduced in
Section \ref{sec:metatheory}: $\Sigma\beta_1 : \fst\,(a,b) = a$,
$\Sigma\beta_2 : \snd\,(a,b) = b$ and $\Sigma\eta : w =
(\fst\,w,\snd\,w)$. Naturality for $\Sigma$ was only stated in one
direction (${,}[]$), naturalities in the other direction are
derivable, for $\fst$ this is
$
(\fst\,w)[\gamma]  \overset{\Sigma\beta_1}{=}                          
\fst\,\big((\fst\,w)[\gamma],(\snd\,w)[\gamma]\big) \overset{{,}[]}{=} 
\fst\,\big((\fst\,w,\snd\,w)[\gamma]\big)  \overset{\Sigma\eta}=        
\fst\,(w[\gamma]).
$
For $\Pi$ types, we construct a similar isomorphism
$
(\lambda t\ldotp t[\p]\cdot\q) : \Tm\,\Gamma\,(\Pi\,A\,B) \cong \Tm\,(\Gamma\ext A)\,B : \lam.
$
The first round-trip equality is proven as
$\beta : (\lam\,t)[\p]\cdot\q \overset{\lam[]}{=} \lam\,(t[\p^+])\cdot\q \overset{\Pi\beta}{=} t[p^+][\langle\q\rangle] \overset{\text{(\ref{eq:lifted4})}}{=} t$.
The last step is the admissible equation (\ref{eq:lifted4}) from
Section \ref{sec:admissible}. As above with weakening into an
arbitrary context, this equation holds in the syntax, but not in an
arbitrary model.
The other round-trip is exactly $\Pi\eta : \lam\,(t[\p]\cdot\q) = t$.

We construct another isomorphism between terms depending on a variable and
terms depending on the lifted version of the same variable:
\begin{equation}\label{eq:liftvar}
(\lambda t\ldotp t[\p^+][\langle\un\,\q\rangle]) : \Tm\,(\Gamma\ext A)\,B \cong \Tm\,(\Gamma\ext\Lift\,A)\,(B[\p^+][\langle\q\rangle]) : (\lambda t\ldotp t[\p^+][\langle\mk\,\q\rangle])
\end{equation}
To derive the round-trips, we again need admissible equations. We prove
the first round-trip equality in Appendix \ref{app:tt}, the other
is analogous.
Once we have all the above isomorphisms, we can compose them to obtain commutation of $\Lift$ and $\Pi$:
\begin{alignat*}{10}
  & \Tm\,\Gamma\,\big(\Lift\,(\Pi\,A\,B)\big) \overset{\text{$\Lift$}}{\cong}
  \Tm\,\Gamma\,(\Pi\,A\,B) \overset{\text{$\Pi$}}{\cong} 
  \Tm\,(\Gamma\ext A)\,B \overset{\text{$\Lift$}}{\cong} 
  \Tm\,(\Gamma\ext A)\,(\Lift\,B) \overset{\text{lifted var}}{\cong} \\
  & \Tm\,(\Gamma\ext \Lift\,A)\,(\Lift\,B[\p^+][\langle\un\,\q\rangle]) \overset{\text{$\Pi$}}{\cong}
  \Tm\,\Gamma\,\big(\Pi\,(\Lift\,A)\,(\Lift\,B[\p^+][\langle\un\,\q\rangle])\big)
\end{alignat*}
We hope that the above examples demonstrated the need for equations
(\ref{eq:lifted1})--(\ref{eq:lifted4}) which we will prove in the next
section.

%

\section{Admissible equations}
\label{sec:admissible}

In this section we show that our SSC-based syntax is isomorphic to the
CwF-based syntax with the same type formers.\footnote{Using the two translations from SOGATs to GATs
\cite{DBLP:conf/fscd/KaposiX24}, we can rephrase the contents of this
section as follows: the parallel and single substitution syntax of the SOGAT (\ref{eq:tt}) are
isomorphic.}

\begin{definition}[CwF (see \href{https://szumixie.github.io/single-subst/TT.CwF.Syntax.html}{\Agda})]\label{def:cwf}
  A \emph{category with $\N$-many families} is defined as a category (objects
   $\Con$, morphisms $\Sub$) with a terminal object $\diamond$
  ($\epsilon$ denotes the unique morphism into it); for each $i$ a
  presheaf of types (action on objects denoted $\Ty\,\blank\,i$,
  action on morphisms $\blank[\blank]$); for each $i$ a locally
  representable dependent presheaf $\Tm$ over $\Ty\,\blank\,i$
  (actions denoted $\Tm$, $\blank[\blank]$, local representability is
  denoted $\blank\ext\blank : (\Gamma:\Con)\ra\Ty\,\Gamma\,i\ra\Con$
  with an isomorphism
  $
  (\p\circ\blank,\q[\blank]) : \Sub\,\Delta\,(\Gamma\ext A) \cong (\gamma:\Sub\,\Delta\,\Gamma)\times\Tm\,\Delta\,(A[\gamma]) : (\blank,\blank)
  $
  natural in $\Delta$).
\end{definition}
In the rest of this section, we work with the SSC syntax denoted
$\Con$, $\Sub$, $\Ty$, $\Tm$, and so on, towards the goal of defining
all CwF operations and equations.

The following four equations correspond to equations in B-systems which cannot
be derived from the equations of SSC, these are versions of equations in Section
\ref{sec:tt} lifted over arbitrary amount of $\blank^+$s.
\setlength{\columnsep}{4em}
\begin{multicols}{2}\noindent
\begin{alignat}{10}
  B[\p^{+^n}][(\gamma^+)^{+^n}] & = B[\gamma^{+^n}][\p^{+^n}] \label{eq:lifted1} \\
  B[\p^{+^n}][\langle a \rangle^{+^n}] & = B \label{eq:lifted2}
\end{alignat}
\columnbreak
\begin{alignat}{10}
  \hspace{-4em} B[\langle a \rangle^{+^n}][\gamma^{+^n}] & = B[(\gamma^+)^{+^n}][\langle a[\gamma] \rangle^{+^n}] \label{eq:lifted3} \\
  \hspace{-4em} B[(\p^+)^{+^n}][\langle\q\rangle^{+^n}] & = B\label{eq:lifted4}
\end{alignat}
\end{multicols}
\vspace{-2em}
The corresponding equations on terms are also needed. Note that we do not have
the non-lifted \emph{term} versions of the last two equations in SSC.

To formally state equations (\ref{eq:lifted1})--(\ref{eq:lifted4}), we
define telescopes as an inductive type together with a recursive
operation to append telescopes to contexts (\href{https://szumixie.github.io/single-subst/TT.SSC.Tel.html#Tel}{\Agda}).
\begin{alignat*}{10}
  & \Tel && : \Con\ra\Set   && \rlap{$\blank+\blank : (\Gamma:\Con)\ra\Tel\,\Gamma\ra\Con$} \\
  & \diamond && : \Tel\,\Gamma && \Gamma+\diamond && := \Gamma \\
  & \blank\ext\blank && : (\Omega:\Tel\,\Gamma)\ra\Ty\,(\Gamma+\Omega)\ra\Tel\,\Gamma \hspace{3em} && \Gamma+(\Omega\ext A) && := (\Gamma+\Omega)\ext A \hspace{3em}
\end{alignat*}
We then define a lifting operation over any telescope mutually with an
instantiation operation on telescopes by recursion on telescopes (\href{https://szumixie.github.io/single-subst/TT.SSC.Tel.html#_[_]\%E1\%B5\%80\%CB\%A1}{\Agda}).
\[
\blank[\blank] : \Tel\,\Gamma\ra\Sub\,\Delta\,\Gamma\ra\Tel\,\Delta \hspace{3em}
\blank^{+^\Omega} : (\gamma:\Sub\,\Delta\,\Gamma)\ra\Sub\,(\Delta+\Omega[\gamma])\,(\Gamma+\Omega)
\]

If we try to prove the equations in (\ref{eq:lifted1})--(\ref{eq:lifted4}) by induction on the types
and terms, we run into difficulties in the instantiation case, as we would have
to commute the instantiations without the equations to do so. Therefore, we
first $\alpha$-normalize the syntax to compute away all the instantiations, then
prove the equations by induction on the $\alpha$-normal forms.

A type or term is \emph{$\alpha$-normal} if it does not contain instantiation
operations, except at the leaves of the syntax tree as variables. However,
$\alpha$-normal terms can still contain $\beta$/$\eta$ redexes, so that we do
not need to do full normalization. We define variables and $\alpha$-normal forms
as inductive predicates in $\Prop$ as follows (\href{https://szumixie.github.io/single-subst/TT.SSC.AlphaNorm.html#Var}{\Agda}).
\begin{itemize}
  \item $\q:\Tm\,(\Gamma\ext A)\ (A[\p])$ is a variable.
  \item $x[\p]:\Tm\,(\Gamma\ext A)\ (B[\p])$ is a variable if $x:\Tm\,\Gamma\,B$
  is a variable.
  \item $x:\Tm\,\Gamma\,A$ is $\alpha$-normal if $x$ is a variable.
  \item $\Pi\,A\,B:\Ty\,\Gamma\,i$ is $\alpha$-normal if $A$ and $B$ are
  $\alpha$-normal types.
  \item $f\cdot a:\Tm\,\Gamma\,(B[\langle a \rangle])$ is $\alpha$-normal if
  $A$ and $B$ are $\alpha$-normal types, $f:\Tm\,\Gamma\,(\Pi\,A\,B)$ and
  $a:\Tm\,\Gamma\,A$ are $\alpha$-normal terms.
\end{itemize}
The predicate is defined similarly for the other type and term formers such as
$\lam$. Notably, we do not state that instantiated types/terms are
$\alpha$-normal (except variables). We truncate the
$\alpha$-normal predicate to be a proposition to allow interpreting into it from the
quotiented syntax. Without truncation, interpretation into $\alpha$-normal forms
would expose differences between $\beta$/$\eta$-equal terms (and such an interpretation is not definable).

We prove that the $\alpha$-normal predicate holds for $\alpha$-normal types and
terms instantiated with $\alpha$-normal substitutions (\href{https://szumixie.github.io/single-subst/TT.SSC.AlphaNorm.html\#[]\%E1\%B4\%BA\%E1\%B4\%BE}{\Agda}). However, this needs to be
proved separately for weakenings and $\alpha$-normal single substitutions for
the induction to be structural. We define predicates for weakening (\href{https://szumixie.github.io/single-subst/TT.SSC.AlphaNorm.html#Wk}{\Agda}) and
$\alpha$-normal single substitutions (\href{https://szumixie.github.io/single-subst/TT.SSC.AlphaNorm.html#NSSub}{\Agda}) as follows.
\begin{itemize}
  \item $\p$ is a weakening.
  \item $\gamma^+$ is a weakening if $\gamma$ is a weakening.
  \item $\langle a \rangle$ is an $\alpha$-normal single substitution if $a$ is an
  $\alpha$-normal term.
  \item $\gamma^+$ is an $\alpha$-normal single substitution if $\gamma$ is an
  $\alpha$-normal single substitution.
\end{itemize}
We define $\alpha$-normal substitutions to be the disjoint union of weakenings
and $\alpha$-normal single substitutions (\href{https://szumixie.github.io/single-subst/TT.SSC.AlphaNorm.html#NSub}{\Agda}).

\begin{lemma}[\href{https://szumixie.github.io/single-subst/TT.SSC.AlphaNorm.html\#norm\%E1\%B5\%80}{\Agda}]
  \label{thm:alpha}
  The $\alpha$-normal predicate holds for any type and term.
\end{lemma}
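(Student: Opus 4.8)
The plan is to prove the statement by the induction (elimination) principle of the quotient inductive–inductive SSC syntax, i.e.\ by exhibiting a displayed model (a section) over the syntax whose displayed sorts over $\Ty$ and $\Tm$ are exactly the $\alpha$-normal predicates, whose displayed sort over $\Sub$ is the $\alpha$-normal-substitution predicate, and whose displayed sort over $\Con$ is trivial ($\top$). The decisive simplification is that all of these predicates live in $\Prop$: by proof irrelevance any two inhabitants are equal, so every equation constructor of the syntax ($\Pi[]$, $\lam[]$, $[\p][^+]$, $\q[^+]$, $[\langle\rangle][]$, $\q[\langle\rangle]$, $\Pi\beta$, $\Pi\eta$, $\U\beta$, and so on) is automatically respected. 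Hence it suffices to supply a method for each \emph{point} constructor.

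Most of these methods are immediate. For terms, $\q$ is a variable, hence $\alpha$-normal; and $\lam\,b$, $f\cdot a$, $\cd\,A$, $\mk\,a$, $\un\,a$ are $\alpha$-normal whenever their immediate subterms are, directly by the corresponding clauses of the predicate; similarly $\Pi\,A\,B$, $\U\,i$, $\El\,\hat A$, $\Lift\,A$ for types. For substitutions, $\p$ is a weakening, $\langle a\rangle$ is an $\alpha$-normal single substitution because $a$ is $\alpha$-normal by the induction hypothesis, and $\gamma^+$ stays in the same class (weakening, resp.\ single substitution) as $\gamma$; so every substitution is an $\alpha$-normal substitution. The only genuinely substantial method is instantiation: for $A[\gamma]$ and $t[\gamma]$ the induction hypotheses give that $A$ (resp.\ $t$) is $\alpha$-normal and that $\gamma$ is an $\alpha$-normal substitution, and we conclude by the instantiation-closure lemma stated above, namely that instantiating an $\alpha$-normal type or term by an $\alpha$-normal substitution again yields an $\alpha$-normal type or term.

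Thus the entire difficulty is absorbed into that closure lemma, which I would regard as the real content and the main obstacle. The hard cases there are the variable leaves. Applying a \emph{weakening} to a variable must again give a variable: this is shown by induction on the $\blank^+$-depth of the weakening with an inner induction on the variable, using $\q[\gamma^+]=\q$ and $b[\p][\gamma^+]=b[\gamma][\p]$. Applying an \emph{$\alpha$-normal single substitution} to a variable must give an $\alpha$-normal term: in the base case $\langle a\rangle$ one uses $\q[\langle a\rangle]=a$ and $b[\p][\langle a\rangle]=b$, while in the $\gamma^+$ case $b[\p][\gamma^+]=b[\gamma][\p]$ re-weakens an already-substituted term, so single-substitution closure depends on weakening closure having been established first. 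This dependency, together with the requirement that $\blank^+$ preserve the \emph{class} of substitution so that the outer inductions remain structural, is precisely why $\alpha$-normal substitutions are defined as the disjoint union of weakenings and single substitutions rather than by a single uniform predicate. Finally, I would emphasise that $\Prop$-truncating the predicates is not cosmetic: without it the section would be forced to distinguish $\beta/\eta$-convertible terms, and no such section out of the quotiented syntax exists.
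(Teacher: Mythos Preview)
Your proposal is correct and follows exactly the paper's approach: induction on the QIIT syntax (i.e.\ a $\Prop$-valued displayed model) with the $\alpha$-normal predicates over $\Ty$/$\Tm$ and the $\alpha$-normal-substitution predicate over $\Sub$, discharging the instantiation constructor via the previously stated closure lemma and relying on $\Prop$-truncation to handle the equation constructors for free. Your elaboration of why the closure lemma must treat weakenings and single substitutions separately, and why truncation is essential, also matches the paper's remarks verbatim.
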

\begin{proof}
  By induction on the syntax, $\alpha$-normalizing the substitutions
  at the same time. Note that induction on the syntax refers to the
  elimination principle of the corresponding QIIT.
\end{proof}

Instead of doing induction on $\alpha$-normal forms for each of equations 
(\ref{eq:lifted1})--(\ref{eq:lifted4}), we define a general lemma which can lift any equation between
instantiations over a telescope. For this we define $\Sub^*$ to be $\Sub$ with
freely added identity and composition operations (\href{https://szumixie.github.io/single-subst/TT.SSC.Path.html#Sub*}{\Agda}). We do not require it to
satisfy the category laws as we will not compare $\Sub^*$s directly for
equality. All
instantiation operations, lifting operations, and substitution rules are
redefined for $\Sub^*$ (\href{https://szumixie.github.io/single-subst/TT.SSC.Path.html}{\Agda}).
\begin{lemma}
  \label{thm:lifting}
  Given $\gamma_0, \gamma_1 : \Sub^*\,\Delta\,\Gamma$, if for any $A:\Ty\,\Gamma$, $A[\gamma_0] = A[\gamma_1]$, and
  for any variable $x:\Tm\,\Gamma\,A$, $x[\gamma_0] = x[\gamma_1]$,
  then:
  \begin{itemize}
    \item \textnormal{(\href{https://szumixie.github.io/single-subst/TT.SSC.Lift.html\#5574}{\Agda})} $\Omega[\gamma_0] = \Omega[\gamma_1]$ for $\Omega:\Tel\,\Gamma$
    \item \textnormal{(\href{https://szumixie.github.io/single-subst/TT.SSC.Lift.html\#5713}{\Agda})} $A[\gamma_0^{+^\Omega}] = A[\gamma_1^{+^\Omega}]$ for $A:\Ty\,(\Gamma+\Omega)$
    \item \textnormal{(\href{https://szumixie.github.io/single-subst/TT.SSC.Lift.html\#5874}{\Agda})} $a[\gamma_0^{+^\Omega}] = a[\gamma_1^{+^\Omega}]$ for $a:\Tm\,(\Gamma+\Omega)\,A$
  \end{itemize}
  Note that the later equations depend on the earlier ones.
\end{lemma}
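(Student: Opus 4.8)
The plan is to prove the three statements by a single mutual induction, generalizing over the telescope $\Omega$, and to use $\alpha$-normalization (Lemma~\ref{thm:alpha}) to reduce the type and term cases to induction on $\alpha$-normal structure. The whole point of $\alpha$-normal forms is that they contain no instantiation operations except variables at the leaves; since the hypothesis constrains $\gamma_0,\gamma_1$ only on types and on variables, the idea is to push the liftings $\gamma_i^{+^\Omega}$ inward through the type and term formers using their substitution laws ($\Pi[]$, $\lam[]$, $\El[]$, $\cd[]$, ${\cdot}[]$, $\U[]$, and the $\Lift$/$\Sigma$ analogues, all redefined for $\Sub^*$) until they reach variables, where the hypothesis applies. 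The dependency noted in the statement is essential and must be threaded through the induction: the type equation $A[\gamma_0^{+^\Omega}] = A[\gamma_1^{+^\Omega}]$ is exactly what makes the term equation well-typed (both sides must inhabit the same type), and the telescope equation $\Omega[\gamma_0] = \Omega[\gamma_1]$ is what makes the two enlarged liftings $\gamma_i^{+^{\Omega\ext A}}$ comparable, so each equality is carried forward as a coercion for the next.

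First I would dispatch the telescope statement by induction on $\Omega$: the empty case is immediate, and for $\Omega'\ext A$ the instantiation computes to $\Omega'[\gamma_i]\ext A[\gamma_i^{+^{\Omega'}}]$, so I combine the telescope IH with the type statement applied to $A$ over $\Omega'$. For the type statement I would induct on the $\alpha$-normal type $A:\Ty\,(\Gamma+\Omega)$: each former commutes with the lifting by its substitution law, after which I recurse into subtypes with the same telescope, into codomains (the $\Pi$ and $\Sigma$ cases) with the enlarged telescope $\Omega\ext A'$ — legitimate precisely because $A'$'s two instantiations already agree — and into subterms via the term statement (the $\El$ and $\Lift$ cases). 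The atomic case $\U$ is invariant by $\U[]$, and the type hypothesis guarantees well-typedness of the term statement at the leaves.

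The term statement is the heart of the argument. I would induct on the $\alpha$-normal term: the non-variable formers ($\lam$, application, $\cd$, $\mk$, $\un$, pairing, projections) each commute with the lifting by their substitution laws and reduce directly to the IH, with binders enlarging the telescope exactly as in the type case. The variable case I would handle by an inner induction on the variable predicate: for $\q$ both sides collapse to $\q$ by $\q[^+]$; for $x[\p]$ I rewrite $x[\p][\gamma_i^{+^{\Omega'\ext C}}] = x[\gamma_i^{+^{\Omega'}}][\p]$ using $[\p][^+]$ and apply the inner IH to $x$ over $\Omega'$. When the telescope is empty, the lifting is the identity and the goal becomes exactly the variable hypothesis $x[\gamma_0]=x[\gamma_1]$.

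The main obstacle I anticipate is bookkeeping rather than ideas: because the three statements are type-dependent (each is only well-formed under the preceding one), the induction must transport terms along the type and telescope equalities, and one must check that generalizing over $\Omega$ keeps the recursion structurally decreasing even though the binder cases grow $\Omega$. Both are resolved by treating $\Omega$ as a parameter and inducting on the $\alpha$-normal derivation, whose size strictly decreases in every recursive call, and by relying on the fact that all of SSC's instantiation, lifting, and substitution laws have been redefined for $\Sub^*$, so that every computation step above is available without appealing to any category laws for $\Sub^*$.
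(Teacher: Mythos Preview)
Your proposal is correct and uses the same ingredients as the paper: $\alpha$-normalisation to eliminate instantiations, the substitution laws to push the lifted $\gamma_i$ through each former, an inner induction on the variable structure for the leaf case, and an induction on $\Omega$ for the telescope statement. The decomposition differs slightly. The paper first isolates the variable case and proves $x[\gamma_0^{+^\Omega}] = x[\gamma_1^{+^\Omega}]$ by induction on the telescope and the variable, assuming the telescope equation; it then proves the type and term statements by mutual induction on $\alpha$-normal forms, still assuming the telescope equation; only at the end does it prove the telescope statement by induction on $\Omega$, invoking the conditional type statement and discharging the assumption via the inductive hypothesis on $\Omega'$. You instead fold the variable case into the term statement and describe the telescope statement first, calling the type statement from inside it. The logical content is the same, but the paper's explicit assume-then-discharge pattern for the telescope equation makes the well-foundedness transparent: there is no single decreasing measure spanning all three statements (the telescope case calls the type statement on an arbitrary $A$, not a structural subterm), so your phrase ``single mutual induction'' is a mild misnomer for what is really two nested inductions, exactly as the paper arranges them.
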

\begin{proof}
  Assuming the first equation, we prove that $x[\gamma_0^{+^\Omega}] =
  x[\gamma_1^{+^\Omega}]$ for any variable $x$ by induction on the telescope and
  the variable. Then the two latter equations are proven by mutual induction
  on $\alpha$-normalized types and terms, still assuming the first equation.
  Finally we prove the first equation by induction on the telescope, using the
  previously proven equation for types, discharging its assumption.
\end{proof}

We can simulate parallel substitutions using $\Sub^*$ as iterated single
substitutions, however it does not satisfy the equations of parallel
substitutions. Thus we define parallel substitutions as lists of terms. It seems to be
difficult to define the instantiation operation for $\Tms$ directly, so we also define
a map into $\Sub^*$, and reuse the instantiation operation of $\Sub^*$ to
convert it into a sequence of instantiations of single substitutions (\href{https://szumixie.github.io/single-subst/TT.SSC.Parallel.html\#Tms}{\Agda}).
\begin{alignat*}{10}
  & \Tms && : \Con\ra\Con\ra\Con && \llcorner\blank\lrcorner && : \Tms\,\Delta\,\Gamma \to \Sub^*\,\Delta\,\Gamma \\
  & \epsilon && : \Tms\,\Gamma\,\diamond && \llcorner\epsilon\lrcorner && := \id\circ\p\circ\dots\circ\p \\
  & \blank,\blank && : (\gamma:\Tms\,\Delta\,\Gamma)\ra\Tm\,\Delta\,(A[\llcorner\gamma\lrcorner])\ra\Tms\,\Delta\,(\Gamma\ext A)\hspace{1.9em} && \llcorner\gamma,a\lrcorner && := \llcorner\gamma\lrcorner^+\circ\langle a \rangle
\end{alignat*}
The computation of instantiating with $\Tms$ (\href{https://szumixie.github.io/single-subst/TT.SSC.Parallel.html\#559}{\Agda}) is illustrated below. On the
right-hand side, we first weaken $B$ to the context in which the $a_i$'s live,
then replace the variables of $B$ one by one using the $a_i$'s, which are
well-typed because of the previous weakening.
\[
  B[\llcorner\epsilon,a_1,a_2,\dots,a_{n-1},a_n\lrcorner] = 
  B[\p^{+^n}]\dots[\p^{+^n}][\langle a_1 \rangle^{+^{n-1}}][\langle a_2 \rangle^{+^{n-2}}]\dots[\langle a_{n-1} \rangle^+][\langle a_n \rangle]
\]
All CwF operations and equations can be defined with $\Tms$ by induction, using
Lemma \ref{thm:lifting} to avoid further induction on the syntax (\href{https://szumixie.github.io/single-subst/TT.SSC.Parallel.html}{\Agda}).

\begin{problem}[\href{https://szumixie.github.io/single-subst/TT.Isomorphism.html}{\Agda}]
  Contexts, types, and terms in the SSC syntax are isomorphic to the
  corresponding sorts in the CwF syntax. In addition $\Tms$ is isomorphic to CwF
  substitutions.
\end{problem}
\begin{proof}[Construction]\leavevmode
  \begin{description}
    \item[$\Rightarrow$] By recursion on the syntax, SSC operations can be
    trivially interpreted by CwF operations.
    \item[$\Leftarrow$] By recursion on the syntax, using $\Tms$ to interpret
    parallel substitutions.
  \end{description}
  The round-trips are proven by induction on the syntax.
\end{proof}
This also implies that the initial SSC model is the initial CwF.

\section{Minimisation}
\label{sec:minimisation}

In this section we show that if have universes and $\Pi$ types
(which is the case in the theory (\ref{eq:tt})), we can decrease the
number of equations.

In our single substitution calculus, we have equation $[\p][^+]$
stated both for types and terms (here $B$ is a type and $b$ is a term of type $B$):
$B[\p][\gamma^+] = B[\gamma][\p]$ and $b[\p][\gamma^+] =_{[\p][^+]} b[\gamma][\p]$. 
We even need the first equation to typecheck the second one. We made
the dependency explicit by adding a subscript of the equality in the
equation for terms (in Agda, this dependency has to be made
explicit). An alternative presentation of the second equation without
requiring the first one is
$
[\p][^+]' : (e : B[\p][\gamma^+] = B[\gamma][\p])\ra b[\p][\gamma^+] =_{e} b[\gamma][\p].
$
That is, for any type $B$ for which $B[\p][\gamma^+] = B[\gamma][\p]$,
we have the equation (suitably over the input
equation) for any $b$. Thus this is a conditional equation.

It turns out that if we have Coquand-universes, the conditional
$[\p][^+]'$ rule is enough: the input equation holds for $B := \U\,i$
via $\U[]$, thus we get that for any $\hat{A} : \Tm\,\Gamma\,(\U\,i)$,
$\hat{A}[\p][\gamma^+] = \hat{A}[\gamma][\p]$. But every type has a
code in the universe, so for a $B : \Ty\,\Gamma\,i$,
$
B[\p][\gamma^+] \overset{\U\beta}{=}
\El\,(\cd\,B)[\p][\gamma^+] \overset{\El[]}{=}
\El\,((\cd\,B)[\p][\gamma^+]) \overset{[\p][^+]'\,\U[]}{=}
\El\,((\cd\,B)[\gamma][\p]) \overset{\El[]}{=}
\El\,(\cd\,B)[\gamma][\p] \overset{\U\beta}{=}
B[\gamma][\p].
$
Thus, as the input of equation $[\p][^+]'$ holds all the time, we get this equation for every term.
Equation $[\q][^+] : \q[\gamma^+] = \q$ also only makes sense if we
have $[\p][^+]$ for types, so either we have to make $[\q][^+]$
conditional, or we coerce it along the derived equation.

We play the exact same game with $[\p][\langle\rangle]$: we replace it
with the conditional equation
$
[\p][\langle\rangle]' : (e:B[\p][\langle a\rangle] = B)\ra b[\p][\langle a\rangle] =_e b.
$
Notice that we have the input for $B = \U\,i$, and derive the input
equation for all types. We make $\q[\langle\rangle]$ conditional as
well.

Now we turn our attention to the equation
$
  [\langle\rangle][] : B[\langle a\rangle][\gamma] = B[\gamma^+][\langle a[\gamma]\rangle].
$
We were forced to introduce it so that we can state the substitution
rule for function application $\cdot[]$. A conditional version of
$\cdot[]$ is
$
\cdot[]' : (e : B[\langle a\rangle][\gamma] = B[\gamma^+][\langle a[\gamma]\rangle])\ra(t\cdot a)[\gamma] =_e (t[\gamma])\cdot(a[\gamma]).
$
Again, for $B = \U$ we have the assumption $e$. But then $t$ is in
$\Tm\,\Gamma\,(\Pi\,A\,\U)$, which is isomorphic to $\Ty\,(\Gamma\ext
A)$. So we argue as follows for any $B : \Ty\,(\Gamma\ext A)$:
$
  B[\langle a\rangle][\gamma] \overset{\U\beta}{=}
  \El\,(\cd\,(B[\langle a\rangle][\gamma])) \overset{\cd[]}{=}
  \El\,((\cd\,B)[\langle a\rangle][\gamma]) \overset{\Pi\beta}{=}
  \El\,((\lam\,(\cd\,B)\cdot a)[\gamma]) \overset{\cdot[]'\,\U[]}{=}
  \El\,((\lam\,(\cd\,B)[\gamma])\cdot (a[\gamma])) \overset{\lam[]}{=}
  \El\,(\lam\,(\cd\,B[\gamma^+])\cdot (a[\gamma])) \overset{\Pi\beta}{=}
  \El\,(\cd\,B[\gamma^+][\langle a[\gamma]\rangle]) \overset{\cd[]}{=}
  \El\,(\cd\,B)[\gamma^+][\langle a[\gamma]\rangle] \overset{\U\beta}{=}
  B[\gamma^+][\langle a[\gamma]\rangle].
$
Hence, the conditional $\cdot[]'$ equation implies the condition for
all types.

Finally, we remove equation $[\p^+][\langle\q\rangle]$ by making
$\Pi\eta$ conditional. This needs another change: replacing $\Pi\beta$
with a more general variant conditional on the same equation.
\begin{alignat*}{10}
  & \Pi\eta' && : (e:B[\p^+][\langle\q\rangle] = B)\ra t =_e \lam\,(t[\p]\cdot\q) \hspace{3em}
  & \Pi\beta' && : (e:B[\p^+][\langle\q\rangle] = B)\ra (\lam\,b)[\p]\cdot\q =_e b
\end{alignat*}
But first we need to verify that the instance of $\Pi\beta$ used when
proving $B[\langle a\rangle][\gamma] = B[\gamma^+][\langle
  a[\gamma]\rangle]$ is derivable from $\Pi\beta'$. Assuming
$\hat{B} : \Tm\,(\Gamma\ext A)\,\U$, we argue
$
  \Pi\beta^\U :{} \lam\,\hat{B}\cdot a \overset{\q[\langle\rangle])}{=}
  \lam\,\hat{B}\cdot(\q[\langle a\rangle]) \overset{[\p][\langle\rangle]}=
  ((\lam\,\hat{B})[\p][\langle a\rangle])\cdot(\q[\langle a\rangle]) \overset{\cdot[]'\,\U[]}=
  ((\lam\,\hat{B})[\p]\cdot\q)[\langle a\rangle] \overset{\Pi\beta'\,\U[]}=
  \hat{B}[\langle a\rangle].
$
Now, for any type $B : \Ty\,(\Gamma\ext A)$, we derive the input of $\Pi\beta'/\Pi\eta'$ by
$
  B[\p^+][\langle\q\rangle] \overset{\U\beta}=
  \El\,(\cd\,B)[\p^+][\langle\q\rangle] \overset{\El[]}=
  \El\,(\cd\,B[\p^+][\langle\q\rangle]) \overset{\Pi\beta^\U}=
  \El\,(\lam\,(\cd\,B[\p^+])\cdot\q) \overset{\lam[]}= 
  \El\,(\lam\,(\cd\,B)[\p]\cdot\q) \overset{\Pi\beta'\,\U[]}=
  \El\,(\cd\,B) \overset{\U\beta}=
  B.
$
We summarise this section by formally stating the above.

\begin{definition}[Minimised single substitution calculus with $\Pi$ and $\U$]\label{def:min}
  This section defined the minimised version of Definition
  \ref{def:tt}, relying essentially on the presence of universes and
  $\Pi$ types. For reference, we list all the rules of the minimised
  calculus in Appendix \ref{app:minimisation}.
\end{definition}
\begin{problem}
  The GATs of Definition \ref{def:min} and Definition \ref{def:tt} are
  isomorphic, in particular, all equations are interderivable.
\end{problem}
\begin{proof}[Construction]
  Clear from the construction in this section.
\end{proof}

\section{\texorpdfstring{CwF from SSC with $\Sigma$, $\Pi$ and $\U$}{CwF from SSC with Σ, Π and U}}
\label{sec:cwf}

In this section we show that if the single substitution calculus has
certain type formers, then a CwF structure is derivable in any model.
The idea is that (i) using $\Sigma$ types we emulate contexts; (ii)
using functions between these $\Sigma$ types we emulate parallel
substitutions; (iii) using the functions into the universe we emulate
dependent types. Then single substitutions are just there to set up
$\Sigma$, $\Pi$ and $\U$, and these types are enough to bootstrap the
parallel substitution calculus.

\begin{problem}
  Every CwF (with type formers $\Pi, \dots, \Sigma$) gives rise to an SSC (with
  the same type formers).
\end{problem}
\begin{proof}[Construction]
  Most operations are the same, we set $\gamma^+ :=
  (\gamma\circ\p,\q)$ and $\langle a\rangle := (\id,a)$. All equations are derivable.
\end{proof}
The following construction is also known as the standard model,
metacircular interpretation \cite{DBLP:conf/popl/AltenkirchK16},
contextualisation \cite{DBLP:conf/fscd/BocquetKS23}. Following
\cite{DBLP:conf/mpc/KaposiKK19} we call it termification, as most sorts in the
new model are given by the sort of terms in the old model.
\begin{problem}[Termification]\label{prob:termification}
  From a model of SSC (with $\Pi, \dots, \Sigma$, see Definition
  \ref{def:sigma}), we define a CwF (with the same type formers).
\end{problem}
\begin{proof}[Construction]
  We define iterated lifting $\Lift^k :
  \Ty\,\Gamma\,i\ra\Ty\,\Gamma\,(k+i)$ by induction on $k$, together
  with $\un^k : \Tm\,(\Lift^k\,A)$ $\cong\Tm\,A : \mk^k$. The category 
  part of the CwF is given by types in the empty context and functions
  between them, suitably lifted (on the left hand side of $:=$ there
  is the component in the new CwF-model, on the right hand side the
  components refer to the old SSC-model):
  \[
    \Con := (i:\N)\times\Ty\,\diamond\,i \hspace{6em}
    \Sub\,\Delta\,\Gamma := \Tm\,\diamond\,(\Lift^{\Gamma-\Delta}\,\Delta\Ra\Lift^{\Delta-\Gamma}\,\Gamma)
  \]
  In the definition of $\Sub$, we did not write the projections for
  $\Con$, so $\Delta$ can mean $\Delta_{.1}$ or $\Delta_{.2}$, and we used the truncating subtraction of
  natural numbers. Composition of substitutions is quite involved, but
  it is just function composition written with explicit weakenings and
  appropriate (un)liftings. The category laws hold. The empty context is
  modelled by $\top$, its $\eta$ law holds via $\eta$ for $\top$.
  \begin{alignat*}{10}
    & \gamma\circ\delta := \lam\,(\mk^{\Theta-\Gamma}\,(\un^{\Delta-\Gamma}\,(\gamma[\p]\cdot\mk^{\Gamma-\Delta}\,(\un^{\Theta-\Delta}\,(\delta[\p]\cdot\mk^{\Delta-\Theta}\,(\un^{\Gamma-\Theta}\,\q)))))) \\
    & \id := \lam\,\q \hspace{10em} \diamond := (0,\top) \hspace{10em} \epsilon := \lam\,(\mk^\Gamma\,\tt)
  \end{alignat*}
  Types are given by functions into $\U$, terms are dependent
  functions into the type, with lots of lifting adjustments.
  \begin{alignat*}{10}
    & \Ty\,\Gamma\,i && := \Tm\,\diamond\,\big(\Lift^{1+i-\Gamma}\,\Gamma\Ra\Lift^{\Gamma-(1+i)}\,(\U\,i)\big) \\
    & \Tm\,\Gamma\,A && := \Tm\,\diamond\,(\Pi\,(\Lift^{i-\Gamma}\,\Gamma)\,(\Lift^{\Gamma-i}\,(\El\,(\un^{\Gamma-(1+i)}\,(A[\p]\cdot\mk^{1+i-\Gamma}\,(\un^{i-\Gamma}\,\q))))))
  \end{alignat*}
  To make the notation readable, from now on, we will not write the
  lifting decorations or universe levels. We repeat the previous
  definitions again without writing decorations.
  \begin{alignat*}{10}
    & \Con && := \Ty\,\diamond                                                          && \epsilon && := \lam\,\tt                                                                      && \Gamma\ext A && := \Sigma\,\Gamma\,(\El\,(A[\p]\cdot\q)) \\
    & \Sub\,\Delta\,\Gamma && := \Tm\,\diamond\,(\Delta\Ra\Gamma)                       && \Ty\,\Gamma && := \Tm\,\diamond\,(\Gamma\Ra\U)                                                && (\gamma,a) && := \lam\,(\gamma[\p]\cdot\q,a[\p]\cdot\q) \\
    & \gamma\circ\delta && := \lam\,(\gamma[\p]\cdot(\delta[\p]\cdot\q)) \hspace{1.4em} && A[\gamma] && := \lam\,(A[\p]\cdot(\gamma[\p]\cdot\q))                                         && \p && := \lam\,(\fst\,\q) \\
    & \id && := \lam\,\q                                                                && \Tm\,\Gamma\,A && := \Tm\,\diamond\,(\Pi\,\Gamma\,(\El\,(A[\p]\cdot\q))) \hspace{1.4em}       && \q && := \lam\,(\snd\,\q) \\
    & \diamond && := \top                                                               && a[\gamma] && := \lam\,(a[\p]\cdot(\gamma[\p]\cdot\q))
  \end{alignat*}
  Context extension is given by $\Sigma$ types and pairing/projections
  by pairing/projections of $\Sigma$. All the CwF equations hold, for
  example we derive the functor law for type substitution in Appendix
  \ref{app:functor}. Type formers are added by adjusting them to handle contexts built up
  by $\Sigma$ types, for example,
  $
  \Pi\,A\,B := \lam\,(\cd\,(\Pi\,(\El\,(A[\p]\cdot\q))\,(\El\,(B[\p][\p]\cdot(\q[\p],\q)))))
  $
  and $\lam\,b := \lam\,(\lam\,(b[\p][\p]\cdot(\q[\p],\q)))$. All
  substitution laws and $\beta$/$\eta$-laws hold.
\end{proof}
In Appendix \ref{app:functor}, we show that the roundtrip CwF
$\longrightarrow$ SSC
$\xrightarrow{\text{termification}}$ CwF results in a
contextually isomorphic CwF.

\section{Conclusions and further work}
\label{sec:conclusion}

We described type theory in a minimalistic way, without referring to
categories or parallel substitutions. Our presentation has pedagogical
value, as illustrated in Section \ref{sec:tt}.
It is amazing that any type theory can be described with such a
minimal substitution calculus: the lifted equations
(\ref{eq:lifted1})--(\ref{eq:lifted4}) are not required. We use them
when proving properties of the syntax (such as normalisation), but
they are admissible.

In the future, we would like to investigate whether the calculus can
be minimised even more, e.g.\ to the degree that there is at most one
proof for any equation. This would be interesting for a possible
coherent syntax of type theory avoiding the need for truncation in the
setting of homotopy type theory.

\bibliographystyle{eptcs}
\bibliography{b}

\newpage

\appendix

\section{Detailed comparison of our SSC-calculus and B-systems}
\label{app:bsys}

In this section, we describe the relationship between B-systems
\cite{bc} and our Definition
\ref{def:tt} in detail.

B-systems are B-frames
together with substitution, weakening and generic element
operations. A B-frame is given by sets $B_i$, $\tilde{B}_i$ and
functions between them as shown below. \\
\begin{tikzpicture}
  \node (b0) at (4,0) {$\top$};
  \node (b1) at (5,0) {$B_1$};
  \node (b2) at (12,0) {$B_2$};
  \node (b3) at (15,0) {$\dots$};
  \node (c1) at (7,1) {$\tilde{B}_1$};
  \node (c2) at (14,1) {$\tilde{B}_2$};
  \draw[->,font=\scriptsize] (b1) edge node[below] {$\mathsf{ft}_0$} (b0);
  \draw[->,font=\scriptsize] (b2) edge node[below] {$\mathsf{ft}_1$} (b1);
  \draw[->,font=\scriptsize] (b3) edge node[below] {$\mathsf{ft}_2$} (b2);
  \draw[->,font=\scriptsize] (c1) edge node[above] {$\partial_1$} (b1);
  \draw[->,font=\scriptsize] (c2) edge node[above] {$\partial_2$} (b2);
\end{tikzpicture} \\
Using our notation, this is the following data ($\diamond$ is the empty context, $\ext$ is context extension). \\
\begin{tikzpicture}
  \node (b0) at (4,0) {$\top$};
  \node (b1) at (5,0) {$\Ty\,\diamond$};
  \node (b2) at (12,0) {$(A:\Ty\,\diamond)\times\Ty\,(\diamond\ext A)$};
  \node (b3) at (15,0) {$\dots$};
  \node (c1) at (7,1)  {$(A:\Ty\,\diamond)\times\Tm\,\diamond\,A$};
  \node (c2) at (14,1) {$\big((A:\Ty\,\diamond)\times(B:\Ty\,(\diamond\ext A))\big)\times\Tm\,(\diamond\ext A)\,B$};
  \draw[->,font=\scriptsize] (b1) edge node[below] {$\mathsf{fst}$} (b0);
  \draw[->,font=\scriptsize] (b2) edge node[below] {$\mathsf{fst}$} (b1);
  \draw[->,font=\scriptsize] (b3) edge node[below] {$\mathsf{fst}$} (b2);
  \draw[->,font=\scriptsize] (c1) edge node[above] {$\mathsf{fst}$} (b1);
  \draw[->,font=\scriptsize] (c2) edge node[above] {$\mathsf{fst}$} (b2);
\end{tikzpicture} \\
The substitution operation $\mathbb{S}$ for an $x:\tilde{B}_{n+1}$ is
a homomorphism of the slice B-frames $\mathbb{B}/\partial(x) \ra
\mathbb{B}/\mathsf{ft}(\partial(x))$. In our notation, $x = (\Gamma,
A, a)$ where $a : \Tm\,\Gamma\,A$, and $\mathbb{S}$ corresponds to the
following functions for $\Ty$ (a map between the bottom rows of the
diagrams):
\begin{alignat*}{10}
  & \blank[\langle a\rangle] && : \Ty\,(\Gamma\ext A)\ra\Ty\,\Gamma \\
  & \blank[\langle a\rangle^+] && : \Ty\,(\Gamma\ext A\ext B)\ra\Ty\,(\Gamma\ext B[\langle a\rangle]) \\
  & \blank[\langle a\rangle^{++}] && : \Ty\,(\Gamma\ext A\ext B\ext C)\ra\Ty\,(\Gamma\ext B[\langle a\rangle]\ext C[\langle a\rangle^+]) \\
  & \dots,
\end{alignat*}
and similarly $\mathbb{S}$ also includes all the (lifted) substitution
operations for terms (top rows of the
diagrams). Analogously, the weakening operation corresponds to
$\blank[\p^{+\dots+}]$ operations, and the generic element is $\q$ in
our notation. The six groups of equations correspond to our
$[\langle\rangle][]$, $[\p][^+]$, $\q[^+]$, $[\p][\langle\rangle]$,
$\q[\langle\rangle]$, $[\p^+][\langle\q\rangle]$ equations, in this
order. However we don't include the lifted versions of these equations,
equations $[\langle\rangle][]$, $[\p^+][\langle\q\rangle]$ are only
stated for types, and $\q[^+]$, $\q[\langle\rangle]$ are only stated
for terms. The missing equations are admissible, see Section
\ref{sec:admissible}. In the presence of $\U$ and $\Pi$, we
reduce the needed equations even more, see Section
\ref{sec:minimisation}.

\newpage

\section{Listings for Section \ref{sec:tt}}
\label{app:tt}

In Section \ref{sec:tt}, we presented the single substitution syntax
interleaved with explanations (Definition \ref{def:tt}), here we list
all the rules in one place, for reference.
\vspace{-0.5em}
\begin{multicols}{2}
\noindent The core substitution calculus:\vspace{-.5em}
\begin{alignat*}{10}
  & \Con && : \Set \\
  & \Ty && : \Con\ra\N\ra\Set \\
  & \diamond && : \Con \\
  & \blank\ext\blank && : (\Gamma:\Con)\ra\Ty\,\Gamma\,i\ra\Con \\
  & \Sub && : \Con\ra\Con\ra\Set \\
  & \Tm && : (\Gamma:\Con)\ra\Ty\,\Gamma\,i\ra\Set \\
  & \p && : \Sub\,(\Gamma\ext A)\,\Gamma \\
  & \langle\blank\rangle && : \Tm\,\Gamma\,A\ra\Sub\,\Gamma\,(\Gamma\ext A) \\
  & \blank^+ && : (\gamma:\Sub\,\Delta\,\Gamma)\ra\Sub\,(\Delta\ext A[\gamma])\,(\Gamma\ext A) \\
  & \blank[\blank] && : \Ty\,\Gamma\,i\ra\Sub\,\Delta\,\Gamma\ra\Ty\,\Delta\,i \\
  & \blank[\blank] && : \Tm\,\Gamma\,A\ra(\gamma:\Sub\,\Delta\,\Gamma)\ra\Tm\,\Delta\,(A[\gamma]) \\
  & \q && : \Tm\,(\Gamma\ext A)\,(A[\p]) \\
  & [\p][^+] && : B[\p][\gamma^+] = B[\gamma][\p] \\
  & [\p][^+] && : b[\p][\gamma^+] = b[\gamma][\p] \\
  & \q[^+] && : \q[\gamma^+] = \q \\
  & [\p][\langle\rangle] && : B[\p][\langle a\rangle] = B \\
  & [\p][\langle\rangle] && : b[\p ][\langle a\rangle] = b \\
  & \q[\langle\rangle] && : \q[\langle a\rangle] = a \\
  & [\langle\rangle][] && : B[\langle a\rangle][\gamma] = B[\gamma^+][\langle a[\gamma]\rangle] \\
  & [\p^+][\langle\q\rangle] && : B[\p^+][\langle\q\rangle] = B
\end{alignat*}

\newcolumn
\noindent Rules for individual type formers.\vspace{-.5em}
\begin{alignat*}{10}
  & \Pi && : (A:\Ty\,\Gamma\,i)\ra\Ty\,(\Gamma\ext A)\,i\ra\Ty\,\Gamma\,i \hspace{10em} \\
  & \Pi[] && : (\Pi\,A\,B)[\gamma] = \Pi\,(A[\gamma])\,(B[\gamma^+]) \\
  & \lam && : \Tm\,(\Gamma\ext A)\,B\ra\Tm\,\Gamma\,(\Pi\,A\,B) \\
  & \lam[] && : (\lam\,b)[\gamma] = \lam\,(b[\gamma^+]) \\
  & \blank\cdot\blank && : \Tm\,\Gamma\,(\Pi\,A\,B)\ra(a:\Tm\,\Gamma\,A)\ra \\
  & && \hphantom{{}:{}} \Tm\,\Gamma\,(B[\langle a\rangle]) \\
  & {\cdot}[] && : (t\cdot a)[\gamma] = (t[\gamma])\cdot(a[\gamma]) \\
  & \Pi\beta && : \lam\,b\cdot a = b[\langle a\rangle] \\
  & \Pi\eta && : t = \lam\,(t[\p]\cdot\q) \\
  & \U && : (i:\N)\ra\Ty\,\Gamma\,(1+i) \\
  & \U[] && : (\U\,i)[\gamma] = \U\,i \\  
  & \El && : \Tm\,\Gamma\,(\U\,i) \ra \Ty\,\Gamma\,i \\
  & \El[] && : (\El\,\hat{A})[\gamma] = \El\,(\hat{A}[\gamma]) \\
  & \cd && : \Ty\,\Gamma\,i\ra\Tm\,\Gamma\,(\U\,i) \\
  & \cd[] && : (\cd\,A)[\gamma] = \cd\,(A[\gamma]) \\
  & \U\beta && : \El\,(\cd\,A) = A \\
  & \U\eta && : \cd\,(\El\,\hat{A}) = \hat{A} \\
  & \Lift && : \Ty\,\Gamma\,i\ra\Ty\,\Gamma\,(1+i) \\
  & \Lift[] && : (\Lift\,A)[\gamma] = \Lift\,(A[\gamma]) \\
  & \mk && : \Tm\,\Gamma\,A \ra \Tm\,\Gamma\,(\Lift\,A) \\
  & \mk[] && : (\mk\,a)[\gamma] = \mk\,(a[\gamma]) \\
  & \un[] && : (\un\,a)[\gamma] = \un\,(a[\gamma]) \\
  & \un && : \Tm\,\Gamma\,(\Lift\,A) \ra \Tm\,\Gamma\,A \\
  & \Lift\beta && : \un\,(\mk\,a) = a \\
  & \Lift\eta && : \mk\,(\un\,a) = a
\end{alignat*}
\end{multicols}

\vspace{-0.5em}
\noindent We prove one of the round-trips in the isomorphism for lifting
variables (\ref{eq:liftvar}), the other is analogous:
\begin{alignat*}{10}
  & t[\p^+][\langle\un\,\q\rangle][\p^+][\langle\mk\,\q\rangle] && {=}(\ref{eq:lifted3})                                  &&    t[\p^+][{\p^+}^+][\langle\mk\,\q\rangle^+][\langle\un\,(\mk\,\q)\rangle] && {=}(\Lift\beta) \\
  & t[\p^+][{\p^+}^+][\langle\un\,\q[\p^+]\rangle][\langle\mk\,\q\rangle] && {=}(\un[])                                   &&    t[\p^+][{\p^+}^+][\langle\mk\,\q\rangle^+][\langle\q\rangle] && {=}(\ref{eq:lifted1}) \\      
  & t[\p^+][{\p^+}^+][\langle\un\,(\q[\p^+])\rangle][\langle\mk\,\q\rangle] && {=}(\q[^+])                                &&    t[\p^+][\p^+][\langle\mk\,\q\rangle^+][\langle\q\rangle] && {=}(\ref{eq:lifted2}) \\          
  & t[\p^+][{\p^+}^+][\langle\un\,\q\rangle][\langle\mk\,\q\rangle] && {=}(\ref{eq:lifted3})                              &&    t[\p^+][\langle\q\rangle] && {=}(\ref{eq:lifted4}) \\                                         
  & t[\p^+][{\p^+}^+][\langle\mk\,\q\rangle^+][\langle\un\,\q[\langle\mk\,\q\rangle]\rangle] && {=}(\un[])                &&    t                                                                                             \\
  & t[\p^+][{\p^+}^+][\langle\mk\,\q\rangle^+][\langle\un\,(\q[\langle\mk\,\q\rangle])\rangle] && {=}(\q[\langle\rangle]) \hspace{4em} && 
\end{alignat*}

\section{Listing for Section \ref{sec:minimisation}}
\label{app:minimisation}

For reference, we list the rules for the minimised single substitution
syntax (Definition \ref{def:min}). \\
\begin{minipage}{0.6\textwidth}
\begin{alignat*}{10}
  & \Con && : \Set \\
  & \Ty && : \Con\ra\N\ra\Set \\
  & \diamond && : \Con \\
  & \blank\ext\blank && : (\Gamma:\Con)\ra\Ty\,\Gamma\,i\ra\Con \\
  & \Sub && : \Con\ra\Con\ra\Set \\
  & \Tm && : (\Gamma:\Con)\ra\Ty\,\Gamma\,i\ra\Set \\
  & \p && : \Sub\,(\Gamma\ext A)\,\Gamma \\
  & \langle\blank\rangle && : \Tm\,\Gamma\,A\ra\Sub\,\Gamma\,(\Gamma\ext A) \\
  & \blank^+ && : (\gamma:\Sub\,\Delta\,\Gamma)\ra\Sub\,(\Delta\ext A[\gamma])\,(\Gamma\ext A) \\
  & \blank[\blank] && : \Ty\,\Gamma\,i\ra\Sub\,\Delta\,\Gamma\ra\Ty\,\Delta\,i \\
  & \blank[\blank] && : \Tm\,\Gamma\,A\ra(\gamma:\Sub\,\Delta\,\Gamma)\ra\Tm\,\Delta\,(A[\gamma]) \\
  & \q && : \Tm\,(\Gamma\ext A)\,(A[\p]) \\
  & [\p][^+]' && : (e:B[\p][\gamma^+] = B[\gamma][\p])\ra b[\p][\gamma^+] =_e b[\gamma][\p] \\
  & \q[^+]' && : (e:B[\p][\gamma^+] = B[\gamma][\p])\ra \q[\gamma^+] =_e \q \\
  & [\p][\langle\rangle]' && : (e:B[\p][\langle a\rangle] = B)\ra b[\p ][\langle a\rangle] =_e b \\
  & \q[\langle\rangle]' && : (e:B[\p][\langle a\rangle] = B)\ra \q[\langle a\rangle] =_e a \\
  & \Pi && : (A:\Ty\,\Gamma\,i)\ra\Ty\,(\Gamma\ext A)\,i\ra\Ty\,\Gamma\,i \\
  & \Pi[] && : (\Pi\,A\,B)[\gamma] = \Pi\,(A[\gamma])\,(B[\gamma^+]) \\
  & \lam && : \Tm\,(\Gamma\ext A)\,B\ra\Tm\,\Gamma\,(\Pi\,A\,B) \\
  & \lam[] && : (\lam\,b)[\gamma] = \lam\,(b[\gamma^+]) \\
  & \blank\cdot\blank && : \Tm\,\Gamma\,(\Pi\,A\,B)\ra(a:\Tm\,\Gamma\,A)\ra\Tm\,\Gamma\,(B[\langle a\rangle]) \\
  & {\cdot}[]' && : (e:B[\langle a\rangle][\gamma] = B[\gamma^+][\langle a[\gamma]\rangle])\ra \\
  & && \hphantom{{}:{}} (t\cdot a)[\gamma] =_e (t[\gamma])\cdot(a[\gamma]) \\
\end{alignat*}
\end{minipage}
\begin{minipage}{0.4\textwidth}
\begin{alignat*}{10}
  & \Pi\eta' && : (e:B[\p^+][\langle\q\rangle] = B)\ra \\
  & && \hphantom{{}:{}} t =_e \lam\,(t[\p]\cdot\q) \\
  & \Pi\beta' && : (e:B[\p^+][\langle\q\rangle] = B)\ra \\
  & && \hphantom{{}:{}} (\lam\,b)[\p]\cdot\q =_e b \\
  & \U && : (i:\N)\ra\Ty\,\Gamma\,(1+i) \\
  & \U[] && : (\U\,i)[\gamma] = \U\,i \\  
  & \El && : \Tm\,\Gamma\,(\U\,i) \ra \Ty\,\Gamma\,i \\
  & \El[] && : (\El\,\hat{A})[\gamma] = \El\,(\hat{A}[\gamma]) \\
  & \cd && : \Ty\,\Gamma\,i\ra\Tm\,\Gamma\,(\U\,i) \\
  & \cd[] && : (\cd\,A)[\gamma] = \cd\,(A[\gamma]) \\
  & \U\beta && : \El\,(\cd\,A) = A \\
  & \U\eta && : \cd\,(\El\,\hat{A}) = \hat{A} \\
  & \Lift && : \Ty\,\Gamma\,i\ra\Ty\,\Gamma\,(1+i) \\
  & \Lift[] && : (\Lift\,A)[\gamma] = \Lift\,(A[\gamma]) \\
  & \mk && : \Tm\,\Gamma\,A \ra \Tm\,\Gamma\,(\Lift\,A) \\
  & \mk[] && : (\mk\,a)[\gamma] = \mk\,(a[\gamma]) \\
  & \un[] && : (\un\,a)[\gamma] = \un\,(a[\gamma]) \\
  & \un && : \Tm\,\Gamma\,(\Lift\,A) \ra \Tm\,\Gamma\,A \\
  & \Lift\beta && : \un\,(\mk\,a) = a \\
  & \Lift\eta && : \mk\,(\un\,a) = a \\
  & \\
  & \\
  & \\
\end{alignat*}
\end{minipage}

\newpage

\section{Listings for Section \ref{sec:cwf}}
\label{app:functor}

As an example of the equations in the termification model construction
(Problem \ref{prob:termification}), we derive one of the functor laws
for types:
\begin{equation*}
\begin{alignedat}{10}
  & A[\gamma\circ\delta] && {=} \\
  & \lam\,(A[\p]\cdot(\lam\,(\gamma[\p]\cdot(\delta[\p]\cdot\q))[\p]\cdot\q)) && {=}(\lam[]) \\
  & \lam\,(A[\p]\cdot(\lam\,(\gamma[\p][\p^+]\cdot(\delta[\p][\p^+]\cdot\q))\cdot\q)) && {=}(\Pi\beta, {\cdot}[]) \\
  & \lam\,(A[\p]\cdot(\gamma[\p][\p^+][\langle\q\rangle]\cdot(\delta[\p][\p^+][\langle\q\rangle]\cdot\q))) && {=}([\p][^+]) \\
  & \lam\,(A[\p]\cdot(\gamma[\p][\p][\langle\q\rangle]\cdot(\delta[\p][\p][\langle\q\rangle]\cdot\q))) && {=}([\p][\langle\rangle]) \\
  & \lam\,(A[\p]\cdot(\gamma[\p]\cdot(\delta[\p]\cdot\q))) && {=}(\Pi\beta) \\
  & \lam\,(\lam\,(A[\p][\p^+]\cdot(\gamma[\p][\p^+]\cdot\q))\cdot(\delta[\p]\cdot\q))\,\, && {=}(\lam[]) \\
  & \lam\,(\lam\,(A[\p]\cdot(\gamma[\p]\cdot\q))[\p]\cdot(\delta[\p]\cdot\q)) && {=} \\
  & A[\gamma][\delta]
\end{alignedat}
\end{equation*}

We show that the roundtrip CwF $\longrightarrow$ SSC
$\xrightarrow{\text{termification}}$ CwF results in a contextually
isomorphic CwF. A contextual isomorphism
\cite{DBLP:journals/corr/abs-2211-07487} is a weak CwF-morphism
(pseudomorphism, i.e.\ context extension and the empty context are
only preserved up to isomorphism) which is bijective on types and
terms. Note that a contextual isomorphism preserves all type formers
specified by universal properties.
\begin{problem}
  Given a CwF (with type formers) $M$, let $M'$ denote the CwF
  obtained by first seeing it as an SSC model and then termifying it. We
  construct a contextual isomorphism between $M'$ and $M$.
\end{problem}
\begin{proof}[Construction]
  We denote the components of the contextual isomorphism $F$ as follows.
  \begin{alignat*}{10}
    & F : \Con_{M'}\ra\Con_M && F : \Ty_{M'}\,\Gamma \cong \Ty_M\,(F\,\Gamma) \\
    & F : \Sub_{M'}\,\Delta\,\Gamma\ra\Sub_M\,(F\,\Delta)\,(F\,\Gamma)\hspace{3em} && F : \Tm_{M'}\,\Gamma\,A \cong \Tm_M\,(F\,\Gamma)\,(F\,A)
  \end{alignat*}
  We define them in the same order, omitting the $M$ subscripts:
  $F\,\Gamma := \diamond\ext\Gamma$, 
  $F\,\gamma := (\p,\gamma[\p]\cdot\q)$, 
  $F\,A := \El\,(A[\p]\cdot\q)$, 
  $F\,a := a[\p]\cdot\q$.
  It is easy to check that $F$ preserves the CwF structure, i.e.\ it
  is a functor, $\epsilon : \Sub\,(F\,\diamond)\,\diamond$ is an
  isomorphism, $F\,(A[\gamma]) = F\,A[F\,\gamma]$, $F\,(a[\gamma]) =
  F\,a[F\,\gamma]$ and $(F\,\p,F\,\q) : \Sub\,(F\,(\Gamma\ext
  A))$ $(F\,\Gamma\ext F\,A)$ is an isomorphism. 
\end{proof}
The other round-trip, that is, starting with an SSC-model, termifying
it and comparing the result with the original SSC-model does not provide an
isomorphism because we can't define the map $F :
\Tm\,\diamond\,(\Delta\Ra\Gamma) \ra
\Sub\,(\diamond\ext\Delta)\,(\diamond\ext\Gamma)$ on
substitutions. Such a map is not definable only using single
substitutions (a $\Sub$ is either a lifted weakening or a lifted single substitution, but we don't know anything about the relationship between $\Delta$ and $\Gamma$). However, the set of terms and types are still
isomorphic after the round-trip. We leave formulating the right notion
of contextual isomorphism for SSCs as future work.

\end{document}